\newtheorem{lemma}{Lemma}
\newtheorem{theorem}[lemma]{Theorem}
\newtheorem{definition}[lemma]{Definition}
\newtheorem{proposition}[lemma]{Proposition}
\newenvironment{proof}{\vspace{-0.05in}\noindent{\bf Proof:}}%
{\hspace*{\fill}$\Box$\par\vspace{0.05in}}
\newcommand{\ignore}[1]{}
\newcommand{\scrit}{\textsf{$s_{crit}$}}
\newcommand{\J}{\textsf{$\mathcal{J}$}}
\begin{document}


\title{A Fully Polynomial-Time Approximation Scheme for Speed Scaling
with Sleep State}

\author[1]{Antonios Antoniadis}
\author[2]{Chien-Chung Huang}
\author[1]{Sebastian Ott}
\affil[1]{Max-Planck-Institut f\"ur Informatik\\
  Saarbr\"{u}cken, Germany\\
  \texttt{\{aantonia,ott\}@mpi-inf.mpg.de}}
\affil[2]{Chalmers University\\
  G\"{o}teborg, Sweden\\
  \texttt{villars@gmail.com}}

\date{}
\maketitle

\abstract{ We study classical deadline-based preemptive scheduling of tasks in a computing
environment equipped with both dynamic speed scaling and sleep state
capabilities: Each task is specified by a release time, a deadline
and a processing volume, and has to be scheduled on a single,
speed-scalable processor that is supplied with a sleep state. In the
sleep state, the processor consumes no energy, but a constant wake-up
cost is required to transition back to the active state. In
contrast to speed scaling alone, the
addition of a sleep state makes it sometimes beneficial to accelerate
the processing of
tasks in order to transition the processor to the
sleep state for longer amounts of time and incur further energy savings. The goal is to output a feasible schedule
that minimizes the energy consumption. Since the introduction of the
problem by Irani et al.~\cite{ISG}, its exact computational complexity
has been repeatedly posed as an open question (see e.g.~\cite{IP05,AA12,BCD12}). The
currently best known upper and lower bounds are a $4/3$-approximation
algorithm and NP-hardness due to~\cite{AA12} and~\cite{AA12,KS13},
respectively. 

We close the aforementioned gap between the upper and lower bound on the computational complexity of
speed scaling with sleep state by presenting a fully polynomial-time
approximation scheme for the problem. The scheme is based on a
transformation to a non-preemptive variant of the problem, and a
discretization that exploits a carefully
defined lexicographical ordering among schedules.

}

\section{Introduction}

As energy-efficiency in computing environments becomes more and more
crucial, chip manufacturers are increasingly incorporating
energy-saving functionalities to their processors. One of the most common
such functionalities is \emph{dynamic speed scaling}, where the processor is
capable to dynamically adjust the speed at which it operates. A higher speed implies a
higher performance, but this performance comes at the cost of a higher energy
consumption. On the other hand, a lower speed results in better
energy-efficiency, but at the cost of performance degradation. In
practice, it has been observed~\cite{Brooks00,BCKP12} that the power consumption of the processor is approximately
proportional to its speed cubed. However, even when the processor is
idling, it consumes a non-negligible amount of energy just for the sake
of ``being active'' (for example because of leakage
current). Due to this fact, additional
energy-savings can be obtained by further incorporating a \emph{sleep state} to the
processor, in addition to the speed-scaling capability. A sleep state is a state of negligible or even zero
energy-consumption, to which the processor can transition
when it is idle. Some fixed energy-consumption is then required to
transition the processor back to the active state in order to
continue processing.

This article studies the offline problem of minimizing energy-consumptions in
computational settings that are equipped with both speed scaling and
sleep state capabilities. This problem is called \textit{speed scaling with sleep state},
and the algorithmic study of it was initiated 
in~\cite{ISG}. 

Consider a processor that is equipped with
two states: the \textit{active state} during which it can execute jobs
while incurring some energy consumption, and the \textit{sleep state}
during which no jobs can be executed, but also no energy is
consumed. We assume that a \textit{wake-up operation}, that is a
transition from the sleep state to the active state, incurs a
constant energy cost $C>0$, whereas transitioning from the active state to the sleep state
is free of charge. Further, as in~\cite{AA12, ISG}, the power required by the processor in the
active state is a convex and non-decreasing function $P$ of its speed
$s$. We assume that $P(0)>0$, since (i) as already mentioned,
real-world processors are known to have leakage
current and (ii) otherwise the sleep state would be
redundant. 
Further motivation for considering arbitrary convex power functions
for speed scaling can be found, for example, in ~\cite{BCP13}.

The input is a set \J\ of $n$ jobs. 
Each job $j$ is
associated with a release time $r_j$, a deadline $d_j$ and a
processing volume $v_j$. One can think of the processing volume as the
number of CPU cycles that are required in order to completely process
the job, so that if job $j$ is processed at a speed of $s$, then $v_j/s$
time-units are required to complete the job. We call the interval
$[r_j,d_j)$ the \emph{allowed interval} of job $j$, and say that job
$j$ is active at time point $t$ if and only if $t\in
[r_j,d_j)$\footnote{Unless stated differently, throughout the text an interval will always have
  the form $[x_1,x_2)$.}.
Furthermore, we may assume without loss of generality
that $v_{min}:= \min_{j\in\J}v_j$ is normalized to $1$, and that
$\min_{j\in \J}r_j=0$. Further, let $d_{max}:=\max_{j\in \J} d_j$ be the last deadline of the job set $\J$.

 A \emph{schedule} is defined as a mapping of every time point $t$ to
 the state of the processor, its speed, and the job being processed at $t$ (or \emph{null} if there is no job running at $t$). Note that
the processing speed is zero whenever the processor sleeps, and that a job is only processed when the speed is strictly positive. A schedule is
called \emph{feasible} when the whole processing volume of every task
$j$ is completely processed in $j$'s allowed interval $[r_j,d_j)$. Preemption of jobs is
allowed. 

The energy consumption incurred by schedule $\mathcal{S}$ while the processor is in the
active state, is its power integrated over time, i.e. $\int P(s(t)) dt$, where $s(t)$ is the processing speed at time $t$, and the integral is taken
over all time points in $[0,d_{max})$ during which the processor is
active under $\mathcal{S}$. Assume that $\mathcal{S}$ performs $k$
transitions from the sleep state to the active state. (We will assume
throughout the paper that initially, prior to the first release time,
as well as finally, after the last deadline, the processor is in the active
state. However, our results can be easily adapted for the setting where the
processor is initially and/or eventually in the sleep state). Then the total
energy consumption of $\mathcal{S}$ is $E(S) := \int P(s(t))dt +kC$, where
again the integral is taken over all time points at which
$\mathcal{S}$ keeps the processor in the active state. 
We are seeking a
feasible schedule that minimizes the total energy consumption. 

Observe that, by Jensen's inequality,
and by the convexity of the power function, it is never beneficial to
process a job with a varying speed. Irani et al. \cite{ISG} observed
the existence of a \emph{critical speed \scrit}, which is the
most efficient speed for processing tasks. This critical speed
is the smallest speed that minimizes the function $P(s)/s$. Note that,
by the convexity of $P(s)$, the only case where the critical speed
$\scrit$ is not well defined, is when $P(s)/s$ is always
decreasing. However, this would render the setting unrealistic, and
furthermore make the algorithmic
problem trivial, since it would be optimal to process
every job at an infinite speed. We may therefore assume that this case
does not occur. Further, it can be shown~(see \cite{ISG}) that for any
$s\ge \scrit$, the function $P(s)/s$ is non-decreasing.

\subsection{Previous Work}

The theoretical model for dynamic speed scaling was introduced in a seminal paper by Yao,
Demers and Shenker~\cite{YDS}. They developed a polynomial time
algorithm called \emph{YDS}, that outputs a minimum-energy schedule
for this setting. Irani, Shukla and Gupta~\cite{ISG} initiated the algorithmic study of speed scaling combined with a sleep state. Such a
setting suggests the so-called \emph{race to idle} technique
where some tasks are accelerated over their minimum required speed in
order to incur a higher energy-saving by transitioning the processor
to the sleep state for longer periods of time (see~\cite{Bailis11,
  Gandhi09, Garrett07, Raghavan13} and references therein for more
information regarding the race to idle technique). Irani et al. developed
a $2$-approximation algorithm for speed scaling with sleep state, but the computational
complexity of the scheduling problem has remained open. The first step
towards attacking this open problem was made by Baptiste~\cite{B06}, who
gave a polynomial time algorithm for the case when the processor
executes all tasks at one fixed speed level, and all tasks are of
unit-size. Baptiste's algorithm is based on a clever dynamic
programming formulation of the scheduling problem, and was later
extended to (i) arbitrarily-sized tasks in~\cite{BCD12}, and (ii) a
multiprocessor setting in~\cite{DGHSZ13}.

More recently, Albers and Antoniadis~\cite{AA12} improved the upper
bound on the approximation ratio of the general problem, by developing a
$4/3$-approximation algorithm. With respect to the lower
bound, \cite{AA12} give an NP-hardness reduction from the
\emph{partition} problem. The reduction uses a particular
power function that is based on the partition instance, i.e., it is
considered that the power function is part of the input. The reduction of~\cite{AA12}
was later refined by Kumar and Shannigrahi~\cite{KS13}, to show that the problem
is NP-hard for any fixed, non-decreasing and strictly convex power
function. 

The online setting of the problem has also been studied. Irani et al.~\cite{ISG} give a $(2^{2\alpha-2}\alpha^\alpha +
2^{\alpha -1+2})$-competitive online
algorithm. Han et al. \cite{HLLTW10} improved upon this result by
developing an $(\alpha^\alpha+2)$-competitive algorithm for the
problem. Both of the above results assume a power function of the form
$ P(s) = s^\alpha + \beta$, where $\alpha>1$ and $\beta>0$ are constants.

A more thorough discussion on the above scheduling problems can be
found in the surveys~\cite{A10,IP05}.


\subsection{Our Contribution}
We study the offline setting of speed scaling with sleep state. Since the introduction of the
problem by Irani et al.~\cite{ISG}, its exact computational complexity
has been repeatedly posed as an open question (see e.g.~\cite{IP05,AA12,BCD12}). The
currently best known upper and lower bounds are a $4/3$-approximation
algorithm and NP-hardness due to~\cite{AA12} and~\cite{AA12,KS13},
respectively. In this paper, we settle the open question regarding the computational complexity of the
problem, by presenting a fully polynomial-time approximation scheme.

At the core of our approach is a transformation of the original preemptive problem into a non-preemptive problem, where each task is replaced by a
polynomial number of \emph{pieces}. At first sight, it may seem
counterintuitive to transform a preemptive problem into a harder
non-preemptive problem, especially as Bampis et al. \cite{BKLLN13}
show that (for the problem of speed scaling alone) the
ratio between an optimal preemptive and an optimal non-preemptive
solution on the same instance can be very high. However, this does not
apply in our case, as we consider the non-preemptive problem on a
modified instance (we seek to schedule each piece of a task
non-preemptively and not the whole task itself). Furthermore, in our analysis, we make use of a
particular lexicographic ordering, which exploits the advantages of
preemption. More specifically, we design a dynamic program that outputs a schedule
which is optimal among a restricted class of non-preemptive schedules. The definition of this class is based on a discretization of the time horizon by a careful choice of polynomially many time points. Roughly speaking, the class is comprised of these schedules that start and end the processing of each piece at such a time point, and satisfy a certain constraint regarding the processing order of the pieces. To prove that a near-optimal schedule in this class exists, we perform a series of transformations from a lexicographically
minimal optimal schedule for the original problem to a schedule of the
above class, while the energy consumption increases by at most a
factor of $(1+\epsilon)$. The lexicographic ordering is crucial to
ensure that we get the correct ordering among the pieces without
further increasing the energy consumption.  

We remark that Baptiste~\cite{B06} used a dynamic
program of similar structure for the case of unit-sized tasks and a
fixed-speed processor equipped with a sleep state. This dynamic
program is also based on a particular ordering of tasks, which, however, is not sufficient for our setting. Since we have pieces of
different sizes, the swapping argument used
in~\cite{B06} is rendered impossible. 

In Section~\ref{sec:prelim}, we describe the YDS algorithm from~\cite{YDS} for the problem
of speed scaling \emph{without} a sleep state, and then show several
properties that a schedule produced by YDS has for our problem of
speed scaling with sleep state.
We then, in Section~\ref{sec:discretized}, define a particular class of schedules that have a set
of desirable properties, and show that there exists a schedule in this class, whose energy
consumption is within a $(1+\epsilon)$-factor from optimal. 
Finally, in Section~\ref{sec:dp}, we develop an algorithm based on a dynamic
program, that outputs, in polynomial time, a schedule of minimal energy
consumption among all the schedules of the aforementioned class.

\section{Preliminaries}\label{sec:prelim}

We start by giving a short description of the YDS
algorithm presented in \cite{YDS}. For any interval $I$, let
$B(I)$ be the set of tasks whose allowed intervals are within $I$. We define
the \emph{density} of $I$ as
\begin{align*}
dens(I) = \frac{\sum_{j\in B(I)} v_j}{|I|}.
\end{align*}
Note that the average speed that any feasible schedule
uses during interval $I$ is no less than $dens(I)$. YDS works in
rounds. In the first round, the interval $I_1$ of maximal density is
identified, and all tasks in $B(I_1)$ are scheduled during $I_1$ at a
speed of $dens(I_1)$, according to the earliest deadline first policy. Then the tasks in $B(I_1)$ are removed from the
instance and the time interval $I_1$ is ``blacked out''. In general,
during round $i$, YDS identifies the interval $I_i$ of maximal density
(while disregarding blacked out times, and already scheduled jobs),
and then processes all jobs in $B(I_i)$ at a uniform speed of $dens(I_i)$.
YDS terminates when all jobs are scheduled, and its running time is
polynomial in the input size.

We remark that the speed used for the processing of jobs can never increase between two
consecutive rounds, i.e., YDS schedules the tasks by order of
non-increasing speeds. Furthermore, note that by the definition of YDS,
all the tasks scheduled in each round $i$ have their allowed interval within $I_i$.

Given any job instance $\J$, let $FAST(\J)$ be the subset of \J\ that YDS processes at a speed
greater than or equal to \scrit, and let $SLOW(\J):=\J\setminus FAST(\J)$. The following lemma is an extension of a fact proven by Irani et at.~\cite{ISG}.

\begin{lemma}
\label{lem:irani}
For any job instance $\J$, there exists an optimal schedule (w.r.t. speed scaling with sleep state) in which 
\begin{enumerate}
\item\label{irani-1} Every task in $FAST(\J)$ is processed according to YDS.
\item\label{irani-2} Every task $k \in SLOW(\J)$ is run at a uniform speed $s_k\leq \scrit$, and the processor never (actively) runs at a speed less than $s_k$ during $[r_k, d_k)$.
\end{enumerate}
We call an optimal schedule with these properties a \emph{YDS-extension} for $\J$.
\end{lemma}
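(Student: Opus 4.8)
The plan is to begin from an arbitrary optimal schedule and convert it into a YDS-extension by a sequence of transformations, each preserving feasibility and not increasing the energy. As a preprocessing step I would use the convexity of $P$ and Jensen's inequality (as already observed in the preliminaries) to assume that every job runs at a single uniform speed; this normalized schedule is the common starting point for both properties.

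For \ref{irani-1} I would induct over the YDS rounds whose density is at least $\scrit$, which are exactly the rounds that create $FAST(\J)$. In such a round $i$, every job of $B(I_i)$ has its allowed interval inside $I_i$, so any feasible schedule must complete their combined volume $dens(I_i)\cdot|I_i|$ within $I_i$; as this volume fills $I_i$ exactly at speed $dens(I_i)$, the average active speed in $I_i$ is at least $dens(I_i)\ge\scrit$. Since $P(s)/s$ is non-decreasing for $s\ge\scrit$, accelerating part of this volume in order to sleep (the ``race to idle'' option) cannot lower its energy, and the minimum $|I_i|\,P(dens(I_i))$ is attained by the uniform YDS speed with the processor active throughout $I_i$. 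I would then rearrange the schedule to coincide with YDS on $I_i$, ``black out'' $I_i$, and apply the same argument to the remaining rounds, so that in the end all of $FAST(\J)$ is processed according to YDS.

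For \ref{irani-2}, I first argue $s_k\le\scrit$: a slow task has YDS density below $\scrit$, so its allowed interval carries enough slack to run it at speed $\scrit$ or slower, and lowering an overly fast slow job toward $\scrit$ strictly decreases its per-volume energy $P(s)/s$ while only extending its execution into formerly idle or sleeping time, which never increases the number of wake-ups. The floor property is the crux, and I would prove it by a local averaging exchange. If the processor were active at some $t\in[r_k,d_k)$ at a speed $s'<s_k$, it would be running some job $j\neq k$ (necessarily slow, since fast regions run at speed $\ge\scrit\ge s_k$, so these exchanges never touch the fixed YDS structure). Holding the combined active time $\tau_k+\tau_j$ of the two regions fixed, I would rerun both at the common speed $\hat{s}=(s_k\tau_k+s'\tau_j)/(\tau_k+\tau_j)$; the volume $s'\tau_j$ of $j$ still fits inside its own subinterval because $\hat{s}>s'$, and the flexible job $k$ absorbs the remaining capacity within $[r_k,d_k)$. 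By Jensen's inequality this strictly lowers the active energy while creating no new wake-ups, contradicting optimality; hence no active speed below $s_k$ can occur in $[r_k,d_k)$.

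The step I expect to be the main obstacle is the induction for \ref{irani-1} under the sleep state. The energy bound above cleanly lower-bounds the cost of the volume that is forced into $I_i$, but turning this into ``dedicate $I_i$ entirely to $B(I_i)$ and recurse on the blacked-out instance'' requires relocating any foreign work that an optimal schedule might have packed into $I_i$ to outside $I_i$ (where, by the definition of $B(I_i)$, those jobs stay feasible) and checking that this relocation meshes with the surrounding sleep/wake-up pattern, so that the recursion reproduces YDS on all of $FAST(\J)$ without increasing energy. It is worth emphasizing the asymmetry that makes the slow-task exchange one-directional: averaging a slow region down is always feasible because the other job's speed only rises, whereas averaging up could violate that job's deadline --- this is precisely why one obtains a floor at $s_k$ but no matching ceiling.
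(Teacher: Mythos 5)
Your proposal is a different route from the paper's: the paper does not prove properties~\ref{irani-1} and the floor property from scratch, but imports them from Irani et al.~\cite{ISG} (an optimal schedule exists that follows YDS on $FAST(\J)$ and satisfies a weaker floor property $(\ast)$), and then adds only one new argument, namely that $s_k\le\scrit$, via a maximal-interval construction. Your attempt to reprove everything directly leaves the hardest part of property~\ref{irani-1} open, as you yourself note: the lower bound $|I_i|\,P(dens(I_i))$ controls the cost of the volume of $B(I_i)$ inside $I_i$, but it does not show that evicting foreign slow work from $I_i$ is energy-neutral --- that volume must be reinserted elsewhere, where it raises speeds and possibly alters the sleep/wake pattern, and the resulting increase must be weighed against the saving inside $I_i$. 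That balancing act is exactly the content of the cited result, not a routine check.

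For property~\ref{irani-2} there are two further gaps. (a) Your argument that $s_k\le\scrit$ assumes idle or sleeping time is available next to the too-fast slow job, but its entire window may be occupied by work running at speed $>\scrit$; the slack promised by YDS is global, not local. The paper's proof is built around this: it takes the \emph{maximal} interval $\mathcal{I}$ on which the speed exceeds $\scrit$ or FAST jobs run, argues that some slow job $k$ executed in $\mathcal{I}$ has its allowed interval protruding beyond $\mathcal{I}$, and slows down \emph{that} job (not necessarily the offending one) into the adjacent interval $\mathcal{I}'$. (b) Since $P$ is only convex and $P(s)/s$ only non-decreasing above $\scrit$, both your Jensen averaging step and your slowdown step give merely non-strict energy decreases, so ``contradicts optimality'' does not follow and the exchange process has no termination guarantee. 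The paper handles this with the strictly convex tie-breaking pseudo cost $\int s(t)^2\,dt$: it picks an optimal schedule minimizing this secondary objective and derives the contradiction against the pseudo cost rather than the energy. Without such a tiebreaker your extremal argument does not close.
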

\begin{proof}
To break ties among schedules with equal energy
consumption, we introduce the pseudo cost function $\int s(t)^2 dt$
(this idea was first used in~\cite{ISG}). Consider a minimal pseudo
cost schedule $Y$, so that $Y$ satisfies property \ref{irani-1}, and
minimizes the energy consumption among all schedules fulfilling this
property. It was shown in \cite{ISG} that $Y$ is optimal for instance
$\J$, and that under $Y$
\begin{equation}\tag{$\ast$}
\parbox{0.8\linewidth}{every task $k \in SLOW(\J)$ is run at a uniform speed $s_k$, and the processor never (actively) runs at a speed less than $s_k$ during those portions of $[r_k, d_k)$ where no job from FAST(\J) is processed.} 
\end{equation}
It therefore remains to prove that the speeds $s_k$ are no higher than
$\scrit$. For the sake of contradiction, assume that there exists a
job $j \in SLOW(\J)$ which is processed at speed higher than
\scrit. Let $\mathcal{I}$ be a maximal time interval, so that (i)
$\mathcal{I}$ includes at least part of the execution of $j$, and (ii) at any time point $t \in \mathcal{I}$ the processor either runs strictly faster than $\scrit$, or executes a job from $FAST(\J)$. Then there must exist a job $k \in SLOW(\J)$ (possibly $k=j$) which is executed to some extent during $\mathcal{I}$, and whose allowed interval is not contained in $\mathcal{I}$ (otherwise, when running YDS, the density of $\mathcal{I}$ after the jobs in $FAST(\J)$ have been scheduled is larger than $\scrit$, contradicting the fact that YDS processes all remaining jobs slower than $\scrit$). By the maximality of $\mathcal{I}$, there exists some interval $\mathcal{I'} \subseteq [r_k,d_k)$ right before $\mathcal{I}$ or right after $\mathcal{I}$, during which no job from $FAST(\J)$ is executed, and the processor either runs with speed at most \scrit \ or resides in the sleep state. The first case contradicts property~($\ast$), as $k$ is processed during $\mathcal{I}$ and thus at speed $s_k > \scrit$. In the second case, we can use a portion of $\mathcal{I'}$ to slightly slow down $k$ to a new speed $s'$, such that $\scrit < s' < s_k$. The resulting schedule $Y'$ has energy consumption no higher than $Y$, as $P(s)/s$ is non-decreasing for $s\geq \scrit$. Furthermore, if $\mathcal{C}_p$ is the pseudo cost of $Y$, then $Y'$ has pseudo cost $\mathcal{C}_p-v_k s_k+v_k s' < \mathcal{C}_p$. This contradicts our assumptions on $Y$. 
\end{proof}

By the previous lemma, we may use YDS to schedule the jobs in
$FAST(\J)$, and need to find a good schedule only for the remaining
jobs (which are exactly $SLOW(\J)$). To this end, we transform the input instance $\J$ to an
instance $\J'$, in which the jobs $FAST(\J)$ are replaced by
dummy jobs. This introduction of dummy tasks bears resemblance to the
approach of~\cite{AA12}. We then show in Lemma~\ref{lem:reduction}, that any schedule for $\J'$ with a certain property, can be transformed to a schedule for $\J$ without any degradation in the approximation factor.

Consider the schedule $S_{YDS}$ that algorithm YDS produces on
$\J$. Let $I_i=[y_i,z_i)$, $i=1,\dots, \ell$ be the $i$-th maximal interval in which $S_{YDS}$ continuously runs
at a speed greater than or equal to \scrit , and let $T_1,\dots, T_m$ be the remaining
intervals in $[0,d_{max})$ not covered by intervals
$I_1,I_2,\dots, I_\ell$. Furthermore, let $\mathcal{T}:=\cup_{1\le k \le m}\ T_k$. Note that the intervals
$I_i$ and $T_i$ partition the time horizon $[0,d_{max})$, and
furthermore, by the way YDS is defined, 
every job $j\in FAST(\J)$ is active in exactly one interval $I_i$,
and is not active in any interval $T_i$. The opposite does not
necessarily hold, i.e., a job $j\in SLOW(\J)$ may be active in several
(consecutive) intervals $I_i$ and $T_{i'}$. We transform $\J$ to a job instance $\J'$ as
follows:

\begin{itemize}
\item For every job $j\in SLOW(\J)$,
if there exists an $i$ such that $r_j \in I_i$ (resp. $d_j\in I_i$),
then we set $r_{j} := z_i$ (resp. $d_{j} := y_i$), else we keep the job
as it is.
\item For each $I_i$, we replace all tasks $j\in FAST(\J)$ that are
active in $I_i$ by a single task $j^d_i$ with release time at $y_i$, deadline
at $z_i$ and processing volume of $\max\{1, |I_i|\cdot\scrit\}$.
\end{itemize}
Clearly, the above transformation can be done in polynomial time. Note
that after the transformation, there is no release time or deadline in the interior of any interval $I_i$. Furthermore, we have the following
proposition:

\begin{proposition}
\label{prop:fast-slow}
$FAST(\J') = \{j_i^d: 1\leq i\leq \ell\}$ and $SLOW(\J') = SLOW(\J)$. 
\end{proposition}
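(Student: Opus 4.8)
The plan is to analyze the run of YDS on $\J'$ directly and to show that it processes every dummy $j_i^d$ at speed at least \scrit and every job of $SLOW(\J)$ at speed strictly below \scrit. Since the job set $\J'$ consists precisely of the dummies $\{j_i^d\}$ together with $SLOW(\J)$, the two claimed identities $FAST(\J')=\{j_i^d\}$ and $SLOW(\J')=SLOW(\J)$ are complementary, so establishing these two speed statements proves the proposition.

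First I would verify that the dummy intervals are dense in $\J'$. Because the transformation removes every release time and deadline from the interior of each $I_i$, every allowed interval of $\J'$ either contains $I_i$ or is disjoint from its interior; in particular no job of $SLOW(\J)$ has its (new) allowed interval inside $I_i$, so $B(I_i)=\{j_i^d\}$ and $dens(I_i)=\max\{1,|I_i|\cdot\scrit\}/|I_i|\ge\scrit$. Moreover, when YDS schedules $j_i^d$ the interval $I_i$ is still intact (no interior release/deadline can cause it to be partially blacked out earlier), so the max-density interval of that round has density at least $dens(I_i)\ge\scrit$, and hence $j_i^d$ is run at speed at least \scrit.

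The crux is to prevent any slow job from being accelerated to speed \scrit or above, and for this I would establish the auxiliary estimate that for every interval $I$ one has $\sum_{j\in SLOW(\J),\,A'_j\subseteq I} v_j<\scrit\cdot|I\cap\mathcal{T}|$, where $A'_j$ denotes the new allowed interval of $j$. I would derive this from the original instance: pushing a release time forward to some $z_i$, or a deadline back to some $y_i$, only deletes a portion of the allowed interval lying inside $\cup_i I_i$, so it leaves $A'_j\cap\mathcal{T}=[r_j,d_j)\cap\mathcal{T}$ unchanged. Consequently, once YDS on $\J$ has scheduled $FAST(\J)$ and blacked out $\cup_i I_i$ (which it fills exactly), the slow jobs face exactly the availability $A'_j\cap\mathcal{T}$; since YDS runs all of them below \scrit, every interval in this contracted instance has density below \scrit, which is precisely the displayed bound.

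Finally I would combine the two estimates through the round structure of YDS on $\J'$. For a candidate interval $I$ at any stage, let $L$ and $D$ be the total length and total volume of the not-yet-processed dummy intervals $I_i\subseteq I$, and set $S=\sum_{j\in SLOW(\J),\,A'_j\subseteq I} v_j$. Then $D\ge\scrit\cdot L$ while $S<\scrit\cdot|I\cap\mathcal{T}|$. Hence any $I$ that contains a job of $SLOW(\J)$ has $S>0$ and $|I\cap\mathcal{T}|>0$, and the mediant inequality yields $dens(I)<D/L\le\max_i v(j_i^d)/|I_i|$, so some single dummy interval $I_{i^*}\subseteq I$ is strictly denser than $I$. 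Therefore no interval containing a slow job can be a max-density interval of density at least \scrit; every slow job is scheduled in a round whose density is below \scrit and thus runs below \scrit, while each dummy runs at least \scrit by the second paragraph. I expect the main obstacle to be exactly this last interaction, namely ruling out that a slow job is swept into a fast max-density interval together with a dummy, which is what the strict comparison $D/L\ge\scrit>S/|I\cap\mathcal{T}|$ resolves.
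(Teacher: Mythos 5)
Your proposal is correct and, at its core, takes the same route as the paper: both arguments reduce to showing that the round of YDS on $\J'$ that schedules a job of $SLOW(\J)$ has density below \scrit, by splitting the candidate maximal-density interval into its dummy part and its $\mathcal{T}$ part and using the key observation that the transformation preserves $A_j\cap\mathcal{T}$, so the slow jobs still fit into $\mathcal{T}$ at speed below \scrit. The differences are cosmetic: for the dummies, the paper simply notes that $j_i^d$ cannot be feasibly run at a uniform speed below \scrit (which is a touch cleaner than your round-intactness argument, though yours works since no release time or deadline lies in the interior of $I_i$); for the slow jobs, the paper bounds $dens(\mathcal{I})$ directly by a weighted-average computation using the maximality of $\mathcal{I}$, whereas you reach the same conclusion by a mediant inequality exhibiting a strictly denser dummy subinterval --- two sides of the same averaging argument. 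One small repair: your auxiliary estimate is derived on the original timeline, so it cleanly applies only up to the \emph{first} round in which a slow job is scheduled (before any part of $\mathcal{T}$ has been blacked out); asserting it ``at any stage'' is not justified as written. This is harmless, because the standard fact that YDS's round densities are non-increasing (stated as a remark in Section 2, and used by the paper for exactly this purpose) extends the bound to all later rounds; you should invoke that rather than claim the estimate stage by stage.
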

\begin{proof}
Since $\J' = \{j_i^d: 1\leq i\leq \ell\}\cup SLOW(\J)$, and furthermore $SLOW(\J')$ and $FAST(\J')$ are disjoint sets, it
suffices to show that (i) $FAST(\J') \supseteq \{j_i^d: 1\leq i\leq \ell\}$
and that (ii) $SLOW(\J')\supseteq SLOW(\J)$. 

For (i), we observe that no task $j_i^d$ can be feasibly scheduled at a uniform speed less than $\scrit$. As YDS uses a uniform speed for each task, these jobs must belong to $FAST(\J')$.

For (ii), consider the execution of YDS on $\J'$. More specifically, consider the first round when a job from $SLOW(\J)$ is scheduled. Let $\mathcal{I}$ be the maximal density interval of this round, and let $\mathcal{J}_S$ and $\mathcal{J}_d$ be the sets of jobs from $SLOW(\J)$ and $\{j_i^d: 1\leq i\leq \ell\}$, respectively, that are scheduled in this round (note that $\mathcal{I}$ contains the allowed intervals of these jobs). As the speed used by YDS is non-increasing from round to round, it suffices to show that $dens(\mathcal{I})<\scrit$. 

Consider a partition of $\mathcal{I}$ into maximal intervals $\Lambda_1, \dots, \Lambda_a$, s.t. each $\Lambda_k$ is contained in some interval $I_i$ or $T_i$. Then
\begin{multline*}
dens(\mathcal{I}) = \frac{\sum_{j \in \mathcal{J}_d}v_j}{|\mathcal{I}|} + \frac{\sum_{j \in \mathcal{J}_S} v_j}{|\mathcal{I}|} = 
\sum\limits_{\Lambda_k \not\subseteq
  \mathcal{T}}\left(\frac{|\Lambda_k|}{|\mathcal{I}|}
  dens(\Lambda_k)\right) + \frac{\sum_{\Lambda_k \subseteq
    \mathcal{T}} |\Lambda_k|}{|\mathcal{I}|} \cdot \frac{\sum_{j \in
    \mathcal{J}_S} v_j}{\sum_{\Lambda_k\subseteq \mathcal{T}}
  |\Lambda_k|}\\
\le \Big(\sum\limits_{\Lambda_k \not\subseteq
  \mathcal{T}} \frac{|\Lambda_k|}{|\mathcal{I}|}\Big)
  dens(\mathcal{I})
+ \Big(1-\sum\limits_{\Lambda_k \not\subseteq
  \mathcal{T}} \frac{|\Lambda_k|}{|\mathcal{I}|}\Big) \cdot \frac{\sum_{j \in
    \mathcal{J}_S} v_j}{\sum_{\Lambda_k\subseteq \mathcal{T}}
  |\Lambda_k|},
\end{multline*}
since no $\Lambda_k$ can have a density larger than
$dens(\mathcal{I})$ (because $\mathcal{I}$ is the interval of maximal density). It follows that
\[dens(\mathcal{I}) \le \frac{\sum_{j \in \mathcal{J}_S} v_j}{\sum_{\Lambda_k\subseteq \mathcal{T}} |\Lambda_k|}\ .\]
Furthermore, by the definition of $SLOW(\J)$, it is possible to schedule all tasks in $\mathcal{J}_S$ during $\mathcal{I} \cap \mathcal{T}$, at a speed slower than $\scrit$ (since none of the steps in the transformation from $\J$ to $\J'$ reduces the time any task is active during $\mathcal{T}$). Together with the previous inequality, this implies $dens(\mathcal{I}) < \scrit$.
\end{proof}

The following lemma suggests that for obtaining an FPTAS for instance
$\J$, it suffices to give an FPTAS for instance $\J'$, as long as we schedule the tasks $j_i^d$ exactly in their allowed intervals $I_i$.

\begin{lemma}
\label{lem:reduction}
Let $S'$ be a schedule for input instance $\J'$, that (i) processes
each task $j_i^d$ exactly in its allowed interval $I_i$ (i.e. from $y_i$ to $z_i$), and (ii) is
a $c$-approximation for $\J'$. Then $S'$ can be transformed in polynomial time into a schedule $S$
that is a $c$-approximation for input instance $\J$.
\end{lemma}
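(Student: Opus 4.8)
The plan is to obtain $S$ from $S'$ by reversing the dummy substitution only on the intervals $I_i$, while leaving the treatment of the $SLOW(\J)$ jobs completely untouched. Concretely, on each $I_i$ I would erase the processing of the dummy $j_i^d$ (which by hypothesis~(i) is run throughout $[y_i,z_i)$) and instead reinstate there the YDS schedule of exactly those $FAST(\J)$ tasks that are active in $I_i$; outside $\bigcup_i I_i$ the schedule $S$ keeps the timing, speeds and states of $S'$ verbatim. Feasibility of $S$ for $\J$ is then immediate: the dummy filled all of $I_i$ in $S'$ and so does the reinstated FAST processing, so no conflict with the $SLOW(\J)$ jobs arises; each FAST task is active only within its $I_i$ and is fully processed there by YDS; and every $SLOW(\J)$ job keeps a schedule lying in its $\J'$-interval, which is contained in its original interval $[r_j,d_j)$.

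For the energy I would keep separate books for the $I_i$-part and the rest. Write $F_i:=\int_{I_i}P(s_{\mathrm{YDS}}(t))\,dt$ for the energy YDS spends on the FAST tasks in $I_i$, and $D_i:=|I_i|\,P(\max\{1/|I_i|,\scrit\})$ for the energy of running $j_i^d$ at uniform speed on $I_i$. Since both the dummy in $S'$ and the reinstated FAST processing keep the processor active throughout each $I_i$, and the two schedules agree everywhere outside $\bigcup_i I_i$, the active/sleep pattern, and hence the number of wake-ups, is identical in $S$ and $S'$. Thus $E(S)=E(S')-\sum_i E^{S'}(j_i^d)+\sum_i F_i$, where $E^{S'}(j_i^d)$ is the (possibly non-uniform) energy $S'$ spends on $j_i^d$; by convexity of $P$ and Jensen's inequality $E^{S'}(j_i^d)\ge D_i$, so $E(S)\le E(S')+\Delta$ with $\Delta:=\sum_i(F_i-D_i)$. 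I would then verify $\Delta\ge0$: the total volume $V_i$ processed by YDS inside $I_i$ satisfies $V_i\ge\scrit|I_i|$ (all YDS speeds on $I_i$ are at least $\scrit$) and $V_i\ge1=v_{min}$ (there is at least one such task), hence $V_i\ge\max\{1,\scrit|I_i|\}$, the dummy volume; a second application of Jensen gives $F_i\ge|I_i|\,P(V_i/|I_i|)\ge|I_i|\,P(\max\{1/|I_i|,\scrit\})=D_i$.

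The remaining ingredient is the bound $\opt(\J')\le\opt(\J)-\Delta$. To get it I would start from an optimal YDS-extension for $\J$ (Lemma~\ref{lem:irani}) and run the substitution in the opposite direction: on each $I_i$ replace the FAST processing (energy $F_i$) by the uniform-speed dummy $j_i^d$ (energy $D_i$), keeping the $SLOW(\J)$ part unchanged. What makes this feasible for $\J'$ is that in a YDS-extension the FAST tasks are scheduled exactly as by YDS, so they fill every $I_i$ completely; consequently no $SLOW(\J)$ job is ever run during any $I_i$. Hence a $SLOW(\J)$ job with $r_k\in I_i$ is in fact only processed after $z_i$, and one with $d_k\in I_{i'}$ only before $y_{i'}$, so the schedule already respects the shrunken release times and deadlines of $\J'$. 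The resulting schedule is feasible for $\J'$ and has energy $\opt(\J)-\sum_iF_i+\sum_iD_i=\opt(\J)-\Delta$, which yields the inequality.

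Putting the pieces together, $E(S)\le E(S')+\Delta\le c\,\opt(\J')+\Delta\le c(\opt(\J)-\Delta)+\Delta=c\,\opt(\J)-(c-1)\Delta\le c\,\opt(\J)$, where the final step uses $c\ge1$ and $\Delta\ge0$; every step of the construction is clearly polynomial, so $S$ is the desired $c$-approximation for $\J$. The step I expect to be the main obstacle is the careful handling of the $I_i$-boundaries: one must argue that reinstating the FAST processing neither introduces new wake-ups nor collides with the $SLOW(\J)$ jobs, and, symmetrically, that in the YDS-extension of $\J$ the $SLOW(\J)$ jobs genuinely avoid the $I_i$'s, so that contracting their intervals to the $\J'$-intervals loses nothing. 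Both rely essentially on the fact, guaranteed by Lemma~\ref{lem:irani} together with the definition of the $I_i$, that the FAST tasks (equivalently the dummies) occupy each $I_i$ entirely.
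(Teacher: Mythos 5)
Your proposal is correct and, in its essentials, follows the paper's own route: you build $S$ by the identical substitution (reinstating the YDS processing of the $FAST(\J)$ jobs in each $I_i$ in place of the dummy $j_i^d$), you localize the energy comparison to the intervals $I_i$ using that every schedule involved keeps the processor active throughout each $I_i$ (so wake-up costs cancel), and your $\Delta=\sum_i(F_i-D_i)$ is exactly the paper's $E^I(OPT)-E^I(OPT')$, shown nonnegative by the same volume-plus-Jensen argument ($V_i\ge\max\{1,\scrit|I_i|\}$). The one step where you genuinely diverge is the comparison of the two optima. The paper asserts that $E(OPT)-E^I(OPT)$ and $E(OPT')-E^I(OPT')$ are \emph{both equal} to the optimal energy spent in $\mathcal{T}$ by any schedule that processes $SLOW(\J)$ there and stays active in every $I_i$, and from this equality deduces $E(OPT)-E(OPT')=E^I(OPT)-E^I(OPT')$. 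You instead prove only the inequality $E(OPT')\le E(OPT)-\Delta$, by explicitly converting a YDS-extension for $\J$ into a feasible schedule for $\J'$ and checking feasibility against the contracted release times and deadlines (which works precisely because a YDS-extension never runs a $SLOW(\J)$ job inside any $I_i$). This one-directional construction is all your final chain $E(S)\le E(S')+\Delta\le c\,E(OPT')+\Delta\le c\,E(OPT)-(c-1)\Delta\le c\,E(OPT)$ requires, and it is somewhat more self-contained than the paper's two-sided equality claim; the paper's version buys the slightly stronger structural statement that OPT and OPT$'$ coincide in cost on $\mathcal{T}$. Both arguments ultimately rest on the same facts from Lemma~\ref{lem:irani} and the definition of the intervals $I_i$, so I would classify yours as a correct, mildly streamlined variant rather than a different proof.
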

\begin{proof}
Given such a schedule $S'$, we leave the processing in the intervals
$T_1,\dots, T_m$ unchanged, and replace for each interval $I_i$ the
processing of job $j_i^d$ by the original YDS-schedule $S_{YDS}$ during $I_i$. It is easy to see that the
resulting schedule $S$ is a feasible schedule for $\J$. We now argue about the approximation factor.

Let OPT be a YDS-extension for $\J$, and let OPT$'$ be a YDS-extension for $\J'$. Recall that $E(\cdot)$ denotes the energy consumption of a schedule (including wake-up costs). Additionally, let $E^I(S)$ denote the total energy consumption of $S$ in all
intervals $I_1,\dots, I_\ell$ without wake-up costs (i.e. the energy consumption for processing or being active but idle during those intervals), and
define similarly $E^I(S')$, $E^I(OPT)$, and $E^I(OPT')$ for the schedules $S'$, OPT,
and OPT$'$, respectively. Since $S'$ is a $c$-approximation for $\J'$, we have
\begin{align*}
E(S') \le c E(OPT').
\end{align*}
Note that OPT$'$ schedules exactly the task $j_i^d$ in each $I_i$ (using the entire interval for it) by Proposition~\ref{prop:fast-slow}, and thus each of the schedules $S$, $S'$, OPT, and OPT$'$ keeps the processor active during every entire interval $I_i$. Therefore
\begin{align*}
E(S) - E(S') = E^I(S) - E^I(S'), 
\end{align*}
since $S$ and $S'$ have the same wake-up costs and do not differ in the intervals $T_1,\dots, T_m$. Moreover,
\begin{align*}
E(OPT) - E(OPT') = E^I(OPT) - E^I(OPT'),
\end{align*}
as $E(OPT) - E^I(OPT)$ and $E(OPT') - E^I(OPT')$ are both equal to the optimal energy consumption during $\mathcal{T}$, of any schedule that processes the jobs $SLOW(\J)$ in $\mathcal{T}$ and resides in the active state during each interval $I_i$ (including all wake-up costs of the schedule). 
Clearly, $E^I(S) = E^I(OPT)$, and since both $S'$ and OPT$'$ schedule
exactly the task $j_i^d$ in each $I_i$ (using the entire interval for it), we have that $E^I(S') \geq E^I(OPT')$. Therefore
\begin{align*}
E(S) - E(S') \le E(OPT) - E(OPT').
\end{align*}
We next show that $0 \le E^I(OPT) - E^I(OPT') = E(OPT) - E(OPT')$, which implies
\begin{align*}
\label{eq:toshow}
E(S) &\le E(OPT) - E(OPT') + E(S') \le c (E(OPT) - E(OPT')) + E(S') \\ &\le c (E(OPT) - E(OPT')) + c E(OPT') \le c E(OPT).
\end{align*}
Since YDS (when applied to $\J$) uses a speed of at least \scrit \ for each interval $I_i$, and furthermore processes a volume of at least $v_{min}=1$ in each such interval, OPT runs with an average speed of at least $\max\{1/ |I_i|,\scrit\}$ in each $I_i$. On the other hand, OPT$'$ runs with speed exactly $\max\{1/ |I_i|,\scrit\}$ during $I_i$, and therefore $E^I(OPT) \ge E^I(OPT')$.
\end{proof}

\section{Discretizing the Problem}
\label{sec:discretized}

After the transformation in the previous section, we have an instance $\J'$. In this section, 
we show that there exists a ``discretized'' schedule for $\J'$, whose energy consumption 
is at most $1+\epsilon$ times that of an optimal schedule for $\J'$. In the next section, 
we will show how such a discretized schedule can be found by dynamic programming. 

Before presenting formal definitions and technical details, we here first sketch 
the ideas behind our approach. 

A major challenge of the original problem is that we need to deal with an infinite number of possible schedules. We overcome this intractability by ``discretizing'' the problem as follows: (1) we break each job in $SLOW(\J')$ into smaller pieces, and 
(2) we create a set of time points 
and introduce the additional constraint that each piece of a job has to start and end at these time points. 
The number of the introduced time points and job pieces are both
polynomial, which greatly limits 
the amount of guesswork we have to make in the dynamic program. The
challenge is to both find such a discretization and argue that it
does not increase the optimal energy consumption by too much. 

\subsection{Further Definitions and Notation}

We first define the set $W$ of 
time points. 
Given an error parameter $\epsilon > 0$, let $\delta := \min\{\frac14, \frac{\epsilon}{4}
\frac{P(\scrit )}{P(2\scrit )-P(\scrit )}\}$. Furthermore, let
\begin{align*}
W' := \bigcup_{j \in \J'} \{r_j,d_j\}.
\end{align*}
Consider the elements of $W'$ in sorted order, and let $t_i, 1\le i
\le |W'|$ be $i$-th element of $W'$ in this order. We call an
interval $[t_i, t_{i+1})$ for $1\le i \le |W'|-1$ a
\emph{zone}, and observe that every zone is either equal to some interval $I_i$ or contained in $\mathcal{T}$.

%

For each $i$ in $1,\dots , |W'|-1$, let $x(i)$ be the largest integer $j$ so that 
\begin{align*}
(1+\delta)^j\frac{1}{4n^2\scrit (1+\delta) \lceil 1/\delta \rceil}\le t_{i+1}-t_i.
\end{align*}
We are now ready to define the set of time points $W$ as follows:

\begin{multline*}
W := W' \cup \bigg(\ \underset{1\le r \le 16n^6 \lceil 1/\delta\rceil ^2(1+\lceil
1/\delta \rceil )}{\underset{0\le j \le
x(i)}{\underset{i \text{ s.t. }[t_i,t_{i+1})\subseteq \mathcal{T}}{\bigcup}}}
\left \{ t_i + r\cdot \frac{(1+\delta)^j\frac{1}{4n^2\scrit (1+\delta) \lceil 1/\delta \rceil
}}{16n^6\lceil 1/\delta \rceil^2 (1+\lceil 1/\delta \rceil )}, \right.\\
\left. t_{i+1} - r\cdot \frac{(1+\delta)^j\frac{1}{4n^2\scrit (1+\delta) \lceil 1/\delta \rceil
}}{16n^6\lceil 1/\delta \rceil^2 (1+\lceil 1/\delta \rceil )} \right\}\bigg ).
\end{multline*}

Let us explain how these time points in $W$ come about. 
As we will show later (Lemma~\ref{lem:optproperties}(\ref{prop:block})), there exists a certain optimal schedule for $\J'$ 
in which each zone $[t_i,t_{i+1})\subseteq \mathcal{T}$ contains at most one contiguous maximal processing interval, 
and this interval ``touches'' either $t_i$ or $t_{i+1}$ (or both). The geometric series 
$(1+\delta)^j\frac{1}{4n^2\scrit (1+\delta) \lceil 1/\delta \rceil}$ of time points are used to 
approximate the ending/starting time of this maximal processing interval. For each guess of the ending/starting 
time, we 
split the guessed interval, during which the job pieces (to be defined formally immediately) are to be 
processed, into $16n^6\lceil 1/\delta\rceil^2 (1+\lceil
1/\delta\rceil )$ many intervals of equal length.
An example of the set $W$ for a given zone can be seen in
Figure~\ref{fig:W}.

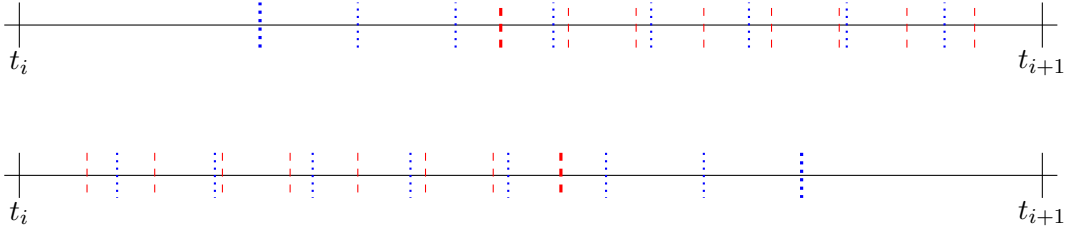
\begin{figure}
\centering 
\begin{tikzpicture}
\begin{scope}
\draw (0,0) -- (14,0);
\draw (0.2,-0.3) -- (0.2,0.3);
\node () at (0.2,-0.5) {$t_i$};
\draw (13.8,-0.3) -- (13.8,0.3);
\node () at (13.8,-0.5) {$t_{i+1}$};
\draw[very thick,red, dashed] (7.4,0.3) -- (7.4,-0.3);
\draw[dashed,red] (1.1,0.3) -- (1.1,-0.3);
\draw[dashed,red] (2,0.3) -- (2,-0.3);
\draw[dashed,red] (2.9,0.3) -- (2.9,-0.3);
\draw[dashed,red] (3.8,0.3) -- (3.8,-0.3); 
\draw[dashed,red] (4.7,0.3) -- (4.7,-0.3);
\draw[dashed,red] (5.6,0.3) -- (5.6,-0.3);
\draw[dashed,red] (6.5,0.3) -- (6.5,-0.3);
\draw[very thick,blue, dotted] (10.6,0.3) -- (10.6,-0.3);
\draw[dotted,blue,thick] (1.5,0.3) -- (1.5,-0.3);
\draw[dotted,blue,thick] (2.8,0.3) -- (2.8,-0.3);
\draw[dotted,blue,thick] (4.1,0.3) -- (4.1,-0.3);
\draw[dotted,blue,thick] (5.4,0.3) -- (5.4,-0.3); 
\draw[dotted,blue,thick] (6.7,0.3) -- (6.7,-0.3);
\draw[dotted,blue,thick] (8,0.3) -- (8,-0.3);
\draw[dotted,blue,thick] (9.3,0.3) -- (9.3,-0.3);
\end{scope}
\begin{scope}[rotate around={180:(7,0)}]
\draw (0,-2) -- (14,-2);
\draw (0.2,-2.3) -- (0.2,-1.7);
\node () at (0.2,-1.5) {$t_{i+1}$};
\draw (13.8,-2.3) -- (13.8,-1.7);
\node () at (13.8,-1.5) {$t_{i}$};
\draw[very thick,red, dashed] (7.4,-1.7) -- (7.4,-2.3);
\draw[dashed,red] (1.1,-1.7) -- (1.1,-2.3);
\draw[dashed,red] (2,-1.7) -- (2,-2.3);
\draw[dashed,red] (2.9,-1.7) -- (2.9,-2.3);
\draw[dashed,red] (3.8,-1.7) -- (3.8,-2.3); 
\draw[dashed,red] (4.7,-1.7) -- (4.7,-2.3);
\draw[dashed,red] (5.6,-1.7) -- (5.6,-2.3);
\draw[dashed,red] (6.5,-1.7) -- (6.5,-2.3);
\draw[very thick,blue, dotted] (10.6,-1.7) -- (10.6,-2.3);
\draw[dotted,blue,thick] (1.5,-1.7) -- (1.5,-2.3);
\draw[dotted,blue,thick] (2.8,-1.7) -- (2.8,-2.3);
\draw[dotted,blue,thick] (4.1,-1.7) -- (4.1,-2.3);
\draw[dotted,blue,thick] (5.4,-1.7) -- (5.4,-2.3); 
\draw[dotted,blue,thick] (6.7,-1.7) -- (6.7,-2.3);
\draw[dotted,blue,thick] (8,-1.7) -- (8,-2.3);
\draw[dotted,blue,thick] (9.3,-1.7) -- (9.3,-2.3);
\end{scope}
\end{tikzpicture}
\caption{We assume that $r=1\dots 8$ and that $x(i)=2$. The red dashed
points correspond to $j=1$ and the blue dotted points to $j=2$. For
clarity, we drew the points defined from $t_i$ and
from $t_{i+1}$ in two separate pictures. Note that for each $j$ the number of
points is the same and the points of the same color are at equal distance from each other.}
\label{fig:W}
\end{figure}

Note that $|W|$ is polynomial in the input size and $1/\epsilon$.

\begin{definition}
We split each task $j\in SLOW(\J')$ into $4n^2\lceil1/\delta\rceil$
equal sized \emph{pieces}, and also consider each job $j_i^d$ as a single piece
on its own. For every piece~$u$ of some job $j$, let $job(u) := j$,
$r_u:=r_j$, $d_u:=d_j$, and $v_u:=v_j/(4n^2\lceil 1/\delta\rceil)$ if $j\in SLOW(\J')$, and $v_u:=v_j$ otherwise. Furthermore, let $D$ denote the set of all pieces from all jobs. 
\end{definition}

Note that $|D|= \ell + |SLOW(\J')| \cdot 4n^2\lceil1/\delta\rceil$ is polynomial in the input size and $1/\epsilon$. We now define an ordering of the pieces in $D$. 

\begin{definition}
Fix an arbitrary ordering of the jobs in $\J'$ , s.t. for any two different jobs $j$ and $j'$, $j\prec j'$ implies $r_j \le r_{j'}$. Now extend this ordering to the set of pieces, s.t. for any two pieces $u$ and $u'$, there holds
\begin{align*}
u\prec u' \Rightarrow job(u) \preceq job(u').
\end{align*}
\end{definition}

We point out that any schedule for $\J'$ can also be seen as a schedule for $D$, by implicitly assuming that the pieces of any fixed job are processed in the above order. 

We are now ready to define the class of discretized schedules.

\begin{definition}\label{def:discr}
A \emph{discretized schedule} is a schedule for $\J'$ that satisfies the following two
properties:
\begin{enumerate}[(i)]
\item\label{prop:disc1} Every piece is completely processed in a
single zone, and without preemption.
\item\label{prop:disc2} The execution of every piece starts and ends at a
time point from the set $W$.
\end{enumerate}
A discretized schedule $S$ is called \emph{well-ordered} if and only if
\begin{enumerate}[(i)]\setcounter{enumi}{2}
\item\label{prop:disc3} For any time point $t$, such that in $S$ a piece $u$ ends at $t$, $S$ schedules all pieces $u'\succ u$ with $d_{u'}\geq t$ after $t$.
\end{enumerate}
\end{definition}

Finally, we define a particular ordering over possible schedules, which
will be useful in our analysis.

\begin{definition}
Consider a given schedule. For every job $j \in \J'$, and every $x \le v_j$, let
$c_j(x)$ denote the earliest time point at which volume $x$ of job $j$
has been finished under this schedule. Furthermore, for any $j \in \J'$, we define
\begin{align*}
q_j:=\int_{0}^{v_j} c_j(x)dx.
\end{align*}

Let $j_1\prec j_2 \prec \dots \prec j_{|\J'|}$ be the jobs in $\J'$. A schedule $S$ is \emph{lexicographically smaller} than a schedule $S'$ if and
only if it is lexicographically smaller with respect to the vector $(q_{j_1}, q_{j_2}, \dots, q_{j_{|\J'|}})$.
\end{definition}

Observe that shifting the processing interval of any fraction of some job
$j$ to an earlier time point (without affecting the other processing
times of $j$) decreases the value of $q_j$.

\subsection{Existence of a Near-Optimal Discretized Schedule}

In this section, we show that there exists a YDS-extension for $\J'$ with certain nice properties (recall that 
a YDS-extension is an optimal schedule satisfying the properties of Lemma \ref{lem:irani}),  
that such a YDS-extension can be transformed into a well-ordered
discretized schedule, and that the speed 
of the latter, at all time, is at most $(1+\delta)^3$ times that of the former. This fact essentially 
guarantees the existence of a well-ordered discretized schedule with energy consumption at most $1+\epsilon$ that of an
optimal schedule for $\J'$.

\begin{lemma}\label{lem:optproperties}
Let OPT be a lexicographically minimal YDS-extension for $\J'$. Then
the following hold:
\begin{enumerate}
\item\label{prop:DummiesInIs} Every task $j_i^d$ is scheduled exactly in its allowed interval $I_i$.
\item\label{prop:block} Every zone $[t_i,t_{i+1})\subseteq \mathcal{T}$ has the following two properties:
\begin{enumerate}[(a)]
\item There is at most one contiguous maximal processing interval
  within $[t_i,t_{i+1})$, and this interval either starts at $t_i$
  and/or ends at $t_{i+1}$. We call this interval the \emph{block} of zone $[t_i,t_{i+1})$.
\item OPT uses a uniform speed of at most $\scrit$ during this block.
\end{enumerate}
\item\label{prop:OPTorder} There exist no two jobs $j' \succ j$,
  such that  a portion of $j$ is processed after some portion of $j'$, and before $d_{j'}$.
\end{enumerate}
\end{lemma}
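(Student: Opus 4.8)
The plan is to prove the three claims by two kinds of exchange arguments: one that strictly lowers the energy (contradicting the optimality of a YDS-extension) and one that keeps the energy fixed but strictly decreases the lexicographic vector $(q_{j_1},\dots,q_{j_{|\J'|}})$ (contradicting lex-minimality). Two standing facts would be used throughout. First, inside any zone $[t_i,t_{i+1})\subseteq\mathcal{T}$ no dummy $j_i^d$ is active, so only jobs of $SLOW(\J')$ are processed there. Second, since a zone contains no release time or deadline in its interior, any job processed at any interior point of the zone is in fact active during the whole zone. With these, Property~\ref{prop:DummiesInIs} is essentially immediate: by Proposition~\ref{prop:fast-slow} we have $FAST(\J')=\{j_i^d\}$, so a YDS-extension processes each $j_i^d$ exactly as YDS does on $\J'$ (Lemma~\ref{lem:irani}(\ref{irani-1})); as $j_i^d$ has allowed interval exactly $I_i$ and volume $\max\{1,|I_i|\scrit\}$, its minimum feasible uniform speed $\max\{1/|I_i|,\scrit\}$ is attained only by occupying all of $I_i$, and YDS, running each job at a single speed and never faster than necessary, fills $I_i$.

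For the uniform-speed part of Property~\ref{prop:block}, I would take any two jobs $k_1,k_2\in SLOW(\J')$ processed in the zone, at their uniform speeds $s_{k_1},s_{k_2}\le\scrit$ (Lemma~\ref{lem:irani}(\ref{irani-2})). Both are active throughout the zone, so the processing of $k_2$ lies inside $[r_{k_1},d_{k_1})$; there the processor never runs slower than $s_{k_1}$, giving $s_{k_2}\ge s_{k_1}$, and symmetrically $s_{k_1}\ge s_{k_2}$. Hence all processing in the zone uses one common speed $s\le\scrit$, which yields claim~(b) once the single-block structure of~(a) is established.

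The crux is Property~\ref{prop:block}(a). I would rule out every alternative to a single block touching an endpoint. Suppose the zone contains two or more maximal processing intervals. If the gap between two consecutive ones is spent sleeping, sliding the later interval leftward to abut the earlier one merges two maximal active periods into one, removing a wake-up and strictly lowering the energy, contradicting optimality; the slide is feasible because the involved jobs are active throughout the zone, and keeps every job's uniform speed $s$ (so the result is still a YDS-extension). If instead the gap is spent idling, the same leftward slide preserves the total idle length, hence the energy, but moves processing earlier, contradicting lex-minimality. Iterating leaves a single block; if it were strictly interior, applying the same dichotomy to its two flanking non-processing pieces lets me slide it against one endpoint, consolidating the non-processing time so that the combined idle-plus-wake-up cost does not increase while the lexicographic value strictly decreases. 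The block is slid toward the endpoint whose adjacent non-processing time it can merge with (defaulting to $t_i$), which is exactly why the statement reads ``starts at $t_i$ and/or ends at $t_{i+1}$''. I expect this energy bookkeeping to be the main obstacle: one must treat sleeping and idling gaps separately, confirm each slide preserves the YDS-extension property, and verify that consolidating non-processing time never increases total idle power plus wake-up cost — the very point that also forces the ``and/or''.

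Finally, for Property~\ref{prop:OPTorder} I would assume an inversion: $j'\succ j$ with a portion of $j'$ processed at $t_1$ and a portion of $j$ at $t_2\in(t_1,d_{j'})$. I first dispose of dummies via Property~\ref{prop:DummiesInIs}: if either job were some $j_i^d$, the ordering gives $r_j\le r_{j'}$, and combined with the fact that $j_i^d$ fully occupies $I_i$ this forces both $t_1$ and $t_2$ into $I_i$, which is impossible. Thus $j,j'\in SLOW(\J')$, and from $r_j\le r_{j'}\le t_1<t_2<d_{j'}$ together with $t_2<d_j$ both jobs are active at $t_1$ and $t_2$. By the same reasoning as in the uniform-speed argument, Lemma~\ref{lem:irani}(\ref{irani-2}) forces $s_j=s_{j'}$. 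I then swap equal infinitesimal portions of $j$ and $j'$ between the slots at $t_1$ and $t_2$: since the speeds agree, the energy and each job's uniform-speed profile are preserved (so the schedule remains a YDS-extension), while $j$ moves earlier and $j'$ later. This strictly decreases $q_j$ and leaves $q_{j''}$ unchanged for every $j''\prec j$, so it decreases the lexicographic value, the desired contradiction.
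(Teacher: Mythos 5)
Your overall strategy coincides with the paper's: parts \ref{prop:DummiesInIs}, \ref{prop:block}(b) and \ref{prop:OPTorder} are argued essentially as the paper argues them (the equal-speed comparison via Lemma~\ref{lem:irani}(\ref{irani-2}) and the swap of equal-length portions are exactly the intended steps). The genuine gap is in part \ref{prop:block}(a), in the step that pushes the single remaining block to an endpoint of the zone. You claim the block can always be consolidated ``so that the combined idle-plus-wake-up cost does not increase while the lexicographic value strictly decreases.'' That invariant cannot be maintained: only a \emph{leftward} slide decreases the lexicographic value, and there is a configuration in which no leftward slide is affordable. Concretely, suppose the processor is asleep just before $t_i$ and active just after $t_{i+1}$, the gap $M_1$ before the block is spent asleep, and the gap $M_2$ after it is spent active-idle (optimality of the original schedule then already forces $C\ge P(0)|M_2|$). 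Sliding the block left to $t_i$ creates a trailing gap of length $|M_1|+|M_2|$ that must end in the active state: keeping it active costs $P(0)(|M_1|+|M_2|)$, strictly more idle energy than before, while sleeping through it introduces a second wake-up, costing $C-P(0)|M_2|\ge 0$ extra. So the leftward slide can strictly \emph{increase} the energy. The paper resolves this case by sliding the block \emph{right} to $t_{i+1}$ and sleeping through $[t_i,t_i+|M_1|+|M_2|)$, which merges with the sleep phase preceding $t_i$; this \emph{increases} the lexicographic value, and the contradiction is instead a \emph{strict energy decrease} (a wake-up or the idle energy of $M_2$ is saved), i.e.\ with optimality rather than with lex-minimality. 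Your plan, which defaults to $t_i$ and insists on a lexicographic decrease, does not close this case; you correctly flagged the bookkeeping as the main obstacle, but the fix is not finer accounting — it is switching which of the two contradictions is invoked, and this is precisely what makes the ``and/or'' in the statement unavoidable.

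A smaller inaccuracy of the same flavor occurs when you merge several processing intervals: if the gap between two consecutive ones is a sleep interval, sliding the later interval left does not necessarily ``remove a wake-up and strictly lower the energy'' — the sleep interval merely moves to the right and may still be followed by active time, so the wake-up count and the energy can be unchanged. That sub-case still goes through, but via lex-minimality (energy non-increasing, processing moved strictly earlier), exactly as in your idle-gap case; the claimed strict energy decrease is neither needed nor true in general.
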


\begin{proof}
\begin{enumerate}
\item Since $FAST(\J') = \{j_i^d: 1\leq i\leq \ell\}$ (by
  Proposition~\ref{prop:fast-slow}), and OPT is a
YDS-extension, it follows that each $j_i^d$ is processed exactly
in its allowed interval $I_i$.
\item 
\begin{enumerate}[(a)]
\item Assume for the sake of
contradiction that $[t_i,t_{i+1})\subseteq \mathcal{T}$ contains a number of maximal intervals $N_1,N_2,\dots, N_\psi$ (ordered from left to
right\footnote{For any two time points $t_1<t_2$, we 
say that $t_1$ is to the \emph{left} of $t_2$, and $t_2$
is to the \emph{right} of $t_1$.}) during which jobs are being processed, with $\psi \ge 2$. Let
$M_1,M_2,\dots, M_{\psi'}$ (again ordered from left to
right) be the remaining maximal intervals in
$[t_i,t_{i+1})$, so that $N_1,\dots, N_\psi$ and $M_1,\dots, M_{\psi'}$ partition the zone
$[t_i,t_{i+1})$. Furthermore, note that for each $i=1,\dots, \psi'$, the
processor is either active but idle or asleep during the whole interval
$M_i$, since otherwise setting the processor asleep during the whole
interval $M_i$  would incur a strictly smaller energy consumption.


We modify the schedule by shifting the intervals $N_i$, $i = 2, \dots, \psi$ to the left, so that $N_1, N_2, \dots, N_\psi$ now form a single contiguous processing interval. The intervals $M_k$ lying to the right of $N_1$ are moved further right and merge into a single (longer) interval $M'$ during which no tasks are being processed.
If the processor was active during each of these intervals $M_k$, then we keep the processor active during the new interval $M'$, else we transition it to the sleep state. We observe that the resulting schedule is still a YDS-extension (note that its energy consumption is at most that of the initial schedule), but is lexicographically smaller.

For the second part of the statement, assume that there exists
exactly one contiguous maximal processing interval $N_1$ within
$[t_i,t_{i+1})$, and that there exist two
$M$-intervals, $M_1$ and $M_2$ before and after $N_1$, respectively.

We consider two cases:
\begin{itemize}
\item The processor is active just before $t_i$, or the processor is
asleep both just before $t_i$ and just after $t_{i+1}$: In this case
we can shift $N_1$ left by $|M_1|$ time units, so that it starts
at $t_i$. Again, we keep the processor active during
$[t_i+|N_1|, t_{i+1})$ only if it was active during both $M_1$ and
$M_2$. As before, the resulting schedule remains a
YDS-extension, and is
lexicographically smaller.
\item The processor is in the sleep state just before $t_i$ but active
just after $t_{i+1}$: In this case we shift $N_1$ by $|M_2|$ time units
to the right, so that its right
endpoint becomes $t_{i+1}$. During the new idle interval $[t_i,t_i+|M_1|+|M_2|)$ we set the processor asleep. Note that in this case the processor was
asleep during $M_1$. The schedule remains a YDS-extension, but its
energy consumption
becomes strictly smaller: (i) either the processor was asleep during
$M_2$, in which case the resulting schedule uses the same energy
while the processor is active but has one wake-up operation less, or
(ii) the processor was active and idle during $M_2$, in which
case the resulting schedule saves the idle energy that was expended
during $M_2$.
\end{itemize}
\item The statement follows directly from the second property of Lemma~\ref{lem:irani} and the fact that all tasks processed during $[t_i,t_{i+1})$ belong to $SLOW(\J')$ and are active in the entire zone.
\end{enumerate}
\item Assume for the sake of contradiction that there exist two jobs
$j' \succ j$, such that a portion of $j$ is processed during an interval
$Z = [\zeta_1,\zeta_2)$, $\zeta_2 \le d_{j'}$, and some portion of $j'$ is processed during an interval
$Z' = [\zeta_1',\zeta_2')$, with $\zeta_2'\le \zeta_1$. We first
observe that both jobs belong to $SLOW(\J')$. This follows from the
fact that both jobs are active during the whole interval
$[\zeta_1',\zeta_2)$, and processed during parts of this interval,
whereas any job $j_i^d$ (which are the only jobs in $FAST(\J')$) is
processed exactly in its entire interval $[y_i,z_i)$ (by
statement~\ref{prop:DummiesInIs} of the lemma).

By the second property of Lemma~\ref{lem:irani}, both $j$ and $j'$ are
processed at the same speed. We can now apply a swap
argument. Let $L:=\min\{|Z|,|Z'|\}$. Note that OPT schedules only $j'$
during $[\zeta_2'-L,\zeta_2')$ and only $j$ during
$[\zeta_2-L,\zeta_2)$.  Swap the part of the schedule OPT
in $[\zeta_2'-L,\zeta_2')$ with the schedule in the interval
$[\zeta_2-L, \zeta_2)$. Given the above observations, it can be easily verified that the resulting schedule (i) is feasible and remains a YDS-extension, and (ii) is lexicographically smaller than OPT. 
\end{enumerate} 
\end{proof}

The next lemma shows how to transform the lexicographically minimal
YDS-extension for $\J'$ of the previous lemma 
into a well-ordered discretized schedule. This is the most crucial part of our approach. 
Roughly speaking, the transformation needs to guarantee that (1) in each zone, 
the volume of a job $j \in SLOW(\J')$ processed is an integer multiple of $v_j/(4n^2\lceil 1/\delta \rceil)$ 
(this is tantamount to making sure that each zone has integral job pieces to deal with), (2) the job pieces start 
and end at the time points in $W$, and (3) all the job pieces are processed in the ``right order''. As we will show, 
the new schedule may run at a higher speed than the given lexicographically minimal YDS-extension, but not by too much.

\begin{lemma}\label{lem:discretized}
Let OPT be a lexicographically minimal YDS-extension for $\J'$, and let $s_{\mathcal{S}}(t)$ denote the speed of schedule $\mathcal{S}$ at time $t$, for any $\mathcal{S}$ and $t$.
Then there exists a well-ordered discretized schedule $F$, such that at any
time point $t\in \mathcal{T}$, there holds
\begin{align*}
s_{F}(t)\le (1+\delta)^3s_{OPT}(t),
\end{align*}
and for every $t\notin \mathcal{T}$, there holds 
\begin{align*}
s_{F}(t)=s_{OPT}(t).
\end{align*}
\end{lemma}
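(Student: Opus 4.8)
The plan is to leave $F$ identical to OPT outside $\mathcal{T}$ (which instantly gives $s_F(t)=s_{OPT}(t)$ for $t\notin\mathcal{T}$ and, by Lemma~\ref{lem:optproperties}(\ref{prop:DummiesInIs}), keeps each dummy task $j_i^d$ in its interval $I_i$), and to reshape OPT \emph{inside} each zone $[t_i,t_{i+1})\subseteq\mathcal{T}$ by three successive rounding operations, budgeting one factor $(1+\delta)$ to each. By Lemma~\ref{lem:optproperties}(\ref{prop:block}) every such zone carries a single \emph{block} flush with $t_i$ and/or $t_{i+1}$, run at one uniform speed $s\le\scrit$, and every piece processed there belongs to a job of $SLOW(\J')$ that is active throughout the zone. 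Since $s_{OPT}$ equals $s$ on a block and $0$ elsewhere in the zone, the desired bound is \emph{one-directional}: $F$ may be idle anywhere, but it may process only within OPT's blocks, and there at speed at most $(1+\delta)^3 s$. Because $s$ is constant on a block, it therefore suffices to keep the processing support of $F$ inside that of OPT and to bound each piece's speed by $(1+\delta)^3 s$.

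First I would \emph{integralize}, so as to satisfy property~(\ref{prop:disc1}): turn OPT's fractional per-zone volumes into an assignment in which every zone receives, for each job $j$, an integral number of pieces of size $v_j/(4n^2\lceil 1/\delta\rceil)$. I view each block as a capacity region that can hold volume up to $(1+\delta)$ times OPT's block volume while keeping its extent within OPT's (run the pieces at speed at most $(1+\delta)s$ over the same length), and I round OPT's fractional block-volumes to integral piece counts respecting these inflated capacities and the per-job totals. By Lemma~\ref{lem:optproperties}(\ref{prop:OPTorder}) the blocks in which OPT runs a fixed job form a contiguous time range, which lets the rounding proceed greedily along the order $\prec$ while only advancing jobs, so that release and deadline feasibility are preserved. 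The generous piece count $4n^2\lceil 1/\delta\rceil$ is chosen precisely so that each rounding moves less than one piece and the accumulated perturbation per job is a $\tfrac{1}{2n\lceil 1/\delta\rceil}$-fraction of its volume; making this fit the per-block capacity slack is the crux, and I return to it below.

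Second, I would snap each block's free endpoint (the one strictly inside the zone) to the nearest point of the geometric series $(1+\delta)^j\tfrac{1}{4n^2\scrit(1+\delta)\lceil 1/\delta\rceil}$ inside the block, shortening it. Consecutive grid points differ by a factor $(1+\delta)$, and after the first step each non-empty block holds at least one piece run at speed at most $(1+\delta)\scrit$, hence has length at least $c:=\tfrac{1}{4n^2\scrit(1+\delta)\lceil 1/\delta\rceil}$ — which is exactly why the series starts at $c$; so a grid point exists and the block shrinks by a factor at most $(1+\delta)$, raising the speed by the same factor. Third, the block now runs from a zone endpoint to a geometric grid point and is thus partitioned by $W$ into $16n^6\lceil 1/\delta\rceil^2(1+\lceil 1/\delta\rceil)$ equal cells. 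I would reorder its (now integral) pieces into $\prec$-order — free of charge, since the whole block still shares one speed — and then align each interior boundary between consecutive pieces to a cell boundary, placing all piece endpoints in $W$ (property~(\ref{prop:disc2})) at the price of the last factor $(1+\delta)$; since $W$ provides far more cells than there are pieces in $D$, this alignment perturbs every piece only slightly. Property~(\ref{prop:disc1}) then holds as each piece lives in one zone without preemption, and the within-block $\prec$-ordering together with Lemma~\ref{lem:optproperties}(\ref{prop:OPTorder}) across blocks yields the well-ordered property~(\ref{prop:disc3}).

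The main obstacle will be the first step: bounding the speed increase of integralization \emph{uniformly over all blocks}, and in particular over \emph{light} blocks, where OPT processes less than one piece, and over blocks mixing pieces of very different sizes. Here the one-directional nature of the bound bites — $F$ cannot park extra volume in idle time, so every piece must be accommodated inside OPT's own support — which turns integralization into a capacitated rounding rather than a free rearrangement. The interplay of parameters is essential: the fine granularity $4n^2\lceil 1/\delta\rceil$ keeps each rounding error tiny relative to a job, while the geometric grid, whose smallest spacing $c$ equals the time to run one minimal piece at speed $(1+\delta)\scrit$, is calibrated to absorb the induced length change. I expect the delicate bookkeeping to lie in simultaneously (i) routing the rounded volume only to blocks that can hold it within their $(1+\delta)$-slack, avoiding light blocks, (ii) not destroying the order of Lemma~\ref{lem:optproperties}(\ref{prop:OPTorder}) while jobs are advanced, and (iii) tracking how the three rounding steps compose to the claimed factor $(1+\delta)^3$.
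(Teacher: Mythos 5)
Your three-stage plan (integralize per-zone volumes, snap block lengths to the geometric grid, snap piece endpoints to $W$, budgeting one $(1+\delta)$ factor to each) is exactly the paper's skeleton, but the proposal leaves the first stage genuinely unproved and the third stage as described would fail. For the first stage you pose the problem as a capacitated rounding -- ``routing the rounded volume only to blocks that can hold it within their $(1+\delta)$-slack, avoiding light blocks'' -- and explicitly defer its resolution (``the crux\dots the main obstacle''). The paper's resolution makes the capacity question disappear: since there are at most $2n$ zones, for each job $j$ there is by pigeonhole a zone $h(j)$ in which OPT processes volume $V_j^{h(j)}\ge v_j/(2n)$. One rounds the volume of $j$ \emph{down} to an integer multiple of $v_j/(4n^2\lceil 1/\delta\rceil)$ in every other zone and dumps the entire residue into zone $h(j)$. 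Every zone except $h(j)$ only loses volume of $j$, and $h(j)$ gains at most $v_j/(2n\lceil 1/\delta\rceil)\le \delta\, V_j^{h(j)}$, so re-running each block's (possibly changed) volume over the block's original extent costs at most a $(1+\delta)$ speedup everywhere. No routing, no capacity argument, and no special treatment of light blocks is needed; without this (or an equivalent device) your first transformation is not established.

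The third stage has a concrete flaw: you propose to ``align each interior boundary between consecutive pieces to a cell boundary'' and assert that since there are far more cells than pieces, ``this alignment perturbs every piece only slightly.'' The perturbation is at most one cell length $\Delta$ in absolute terms, but $\Delta$ is a fixed fraction of the \emph{block} length, not of the individual piece's length; a piece whose execution interval is shorter than (or comparable to) $\Delta$ would be compressed by an unbounded factor or annihilated. The paper avoids this by designating in each zone a single \emph{absorber} piece $u$ with execution time $\Gamma_u\ge 4n^3\lceil 1/\delta\rceil(1+\lceil 1/\delta\rceil)\Delta$ (which exists by pigeonhole, as there are at most $4n^3\lceil 1/\delta\rceil$ pieces per zone), then snapping every other piece's free endpoint \emph{outward} -- so those pieces only slow down -- while all the accumulated shifts, at most $4n^3\lceil 1/\delta\rceil\,\Delta\le \Gamma_u/(1+\lceil 1/\delta\rceil)$ in total, are absorbed by shortening $u$ alone, giving $u$ a speedup of at most $1+\delta$ and the other pieces none. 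Your second stage (rounding the block length down to the geometric grid) and the well-orderedness bookkeeping via Lemma~\ref{lem:optproperties}(\ref{prop:OPTorder}) do match the paper, but the two missing devices above are precisely where the proof's content lies.
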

\begin{proof}
Through a series of three transformations, we will transform OPT to a well-ordered
discretized schedule $F$, while upper bounding the increase
in speed caused by each of these transformations. More specifically, we will transform OPT to a schedule $F_1$ satisfying (\ref{prop:disc1}) and (\ref{prop:disc3}) of Definition \ref{def:discr}, then $F_1$ to
$F_2$ where we slightly adapt the block lengths, and finally $F_2$ to $F$ which satisfies all three properties of Definition \ref{def:discr}. Each of these transformations can
increase the speed by at most a factor $(1+\delta)$ for any $t\in \mathcal{T}$
and does not affect the speed in any interval $I_i$.

\textbf{Transformation 1 ($\mathrm{OPT} \rightarrow F_1$):} We will transform
the schedule so that 
\begin{enumerate}[(i)]
\item For each task $j\in SLOW(\J')$, an integer
multiple of $v_j/(4n^2\lceil 1/\delta \rceil)$ volume of job $j$ is
processed in each zone, and the processing order of jobs within each zone is determined by $\prec$. Together with property~\ref{prop:DummiesInIs} of Lemma~\ref{lem:optproperties}, this implies that $F_1$ (considered as a schedule for pieces) satisfies Definition~\ref{def:discr}(\ref{prop:disc1}).
\item The well-ordered property of Definition~\ref{def:discr} is satisfied.
\item For all $t\in \mathcal{T}$ it holds that $s_{F_1}(t)\le (1+\delta)s_{OPT}(t)$, and for every
$t\notin \mathcal{T}$ it holds that $s_{F_1}(t)=s_{OPT}(t)$.
\end{enumerate}

Note that by Lemma~\ref{lem:optproperties}, every zone is either empty, filled exactly by a job $j_i^d$, or contains a single block. For any task $j\in SLOW(\J')$, and every zone $[t_i,t_{i+1})$, let
$V^i_j$ be the processing volume of task $j$ that OPT schedules in
zone $[t_i,t_{i+1})$. Since there can be at most $2n$
different zones, for every task $j$ there exists some index $h(j)$, such that $V^{h(j)}_j\ge
v_j/(2n)$.

For every task $j\in SLOW(\J')$, and every $i\neq h(j)$, we
reduce the load of task $j$ processed in $[t_i,t_{i+1})$, by setting it to 
\begin{align*}
\mathcal{V}_j^i = \Big \lfloor V_j^i / \frac{v_j}{4n^2\lceil 1/\delta \rceil)}\Big \rfloor \cdot \frac{v_j}{4n^2\lceil 1/\delta \rceil}.
\end{align*}
Finally, we set the volume of $j$
processed in $[t_{h(j)},t_{h(j)+1})$ to $\mathcal{V}_j^{h(j)} = v_j - \sum_{i\neq h(j)}
\mathcal{V}_j^i$. To keep the schedule feasible, we process the new volume of each non-empty zone $[t_i,t_{i+1})\subseteq \mathcal{T}$ in the zone's original block $B_i$, at a uniform speed of $\sum_{j\in SLOW(\J')} (\mathcal{V}_j^i)/|B_i|$. Here, the processing order of the jobs within the block is determined by $\prec$.

Note that in the resulting schedule $F_1$, a job may be processed at
different speeds in different zones, but each zone uses only one
constant speed level.

It is easy to see that $F_1$ is a feasible schedule in which for each task $j\in SLOW(\J')$, an integer multiple of $v_j/(4n^2\lceil
1/\delta \rceil)$ volume of $j$ is processed in each zone, and that $\mathcal{V}_j^i \le V_j^i$ for all $i\ne h(j)$. Furthermore, if $i = h(j)$, we have that $\mathcal{V}_j^i
- V_j^i \le v_j/(2n\lceil 1/\delta \rceil)$, and $V_j^i\ge
v_j/(2n)$. It follows that $\mathcal{V}_j^i\le V_j^i+V_j^i/\lceil 1/\delta \rceil \le(1+\delta)V_j^i$ in this case, and
therefore $s_{F_1}(t) \le
(1+\delta)s_{OPT}(t)$ for all $t\in \mathcal{T}$. We note here, that for every task $j_i^d$, and
the corresponding interval $I_i$, nothing changes during the
transformation. 

We finally show that $F_1$ satisfies the well-ordered property of
Definition~\ref{def:discr}. Assume for the sake of contradiction that
there exists a piece $u$ ending at some $t$, and there exists a piece
$u'\succ u$ with $d_{u'}\ge t$ that is scheduled before $t$. Recall
that we can implicitly assume that the pieces of any fixed job are
processed in the corresponding order $\prec$. Therefore $job(u') \succ job(u)$, by
definition of the ordering $\prec$ among pieces. Furthermore, if $[t_k,t_{k+1})$ and $[t_{k'},t_{k'+1})$ are the
zones in which $u$ and $u'$, respectively, are scheduled, then $k'<k$,
as $k' = k$ would contradict $F_1$'s processing order of jobs inside a zone. Also note that $d_{u'}
\ge t_{k+1}$, since $t\in(t_k,t_{k+1}]$, and $(t_k,t_{k+1})$ does not
contain any deadline. This contradicts property \ref{prop:OPTorder} of Lemma~\ref{lem:optproperties}, as the original schedule OPT must have processed some volume of $job(u')$ in $[t_{k'},t_{k'+1})$, and some volume of $job(u)$ in $[t_{k},t_{k+1})$. 

\textbf{Transformation 2 ($F_1\rightarrow F_2$):} In this transformation, we slightly modify the block lengths, as a preparation for Transformation 3. For every non-empty zone $[t_i,t_{i+1})\subseteq \mathcal{T}$, we increase the uniform speed of its block until it has a length of $(1+\delta)^j\frac{1}{4n^2\scrit (1+\delta) \lceil 1/\delta \rceil}$ for some integer $j\ge 0$, keeping one of its endpoints fixed at $t_i$ or $t_{i+1}$. Note that in $F_1$, the block had length at least $\frac{1}{4n^2\scrit (1+\delta) \lceil 1/\delta \rceil}$, since it contained a volume of at least $1/(4n^2\lceil 1/\delta \rceil)$, and the speed in this zone was at most $(1+\delta)\scrit$. The speedup needed for this modification is clearly at most $(1+\delta)$.

As this transformation does not change the processing order of any pieces nor the zone in which any piece is scheduled, it preserves the well-ordered property of Definition~\ref{def:discr}.

\textbf{Transformation 3 ($F_2\rightarrow F$):} In this final
transformation, we want to establish
Definition~\ref{def:discr}(\ref{prop:disc2}). To this end, we shift
and compress certain pieces in $F_2$, such that every execution interval starts and ends at a time point from $W$ (this is already true for pieces corresponding to tasks $j_i^d$). The procedure resembles a transformation done in~\cite{HuangOtt}. For any zone $[t_i,t_{i+1})\subseteq \mathcal{T}$, we do the following:
Consider the pieces that $F_2$ processes within the zone $[t_i, t_{i+1})$, and denote this set of pieces by $D_i$. If $D_i = \emptyset$, nothing needs to be done. Otherwise, let $\beta$ be the integer such that $(1+\delta)^\beta\frac{1}{4n^2\scrit (1+\delta) \lceil 1/\delta \rceil}$ is the length of the block in this zone, and let 
\begin{align*}
\Delta := \frac{(1+\delta)^\beta\frac{1}{4n^2\scrit (1+\delta) \lceil 1/\delta \rceil}}{16n^6\lceil 1/\delta \rceil^2 (1+\lceil 1/\delta \rceil )}. 
\end{align*}
Note that in the definition of $W$, we introduced $16n^6\lceil 1/\delta \rceil^2(1+\lceil 1/\delta\rceil)$ many time points (for $j=\beta$ and $r=1,\dots, 16n^6\lceil 1/\delta \rceil^2(1+\lceil 1/\delta\rceil)$) that subdivide this block into $16n^6\lceil 1/\delta \rceil^2(1+\lceil 1/\delta\rceil)$ intervals of length $\Delta$. Furthermore, since $|D_i| \leq 4n^3\lceil1/\delta\rceil$, there must exist a piece $u \in D_i$ with execution time $\Gamma_u \geq 4n^3\lceil 1/\delta \rceil (1+\lceil 1/\delta \rceil ) \Delta$. 
We now partition the pieces in $D_i \setminus u$ into $D^+$, the
pieces processed after $u$, and $D^-$, the pieces processed before
$u$. First, we restrict our attention to $D^+$. Let $q_1,\ldots,
q_{|D^+|}$ denote the pieces in $D^+$ in the order they are processed
by $F_2$. Starting with the last piece $q_{|D^+|}$, and going down to
$q_1$, we modify the schedule as follows. We keep the end of
$q_{|D^+|}$'s execution interval fixed, and shift its start to the
next earlier time point in $W$, reducing its uniform execution speed accordingly. At the same time, to not produce any overlappings, we shift the execution intervals of all $q_k,\ k<|D^+|$ by the same amount to the left (leaving their lengths unchanged). Eventually, we also move the execution end point of $u$ by the same amount to the left (leaving its start point fixed). This shortens the execution interval of $u$ and ``absorbs'' the shifting of the pieces in $D^+$ (note that the processing speed of $u$ increases as its interval gets shorter). We then proceed with $q_{|D^+|-1}$, keeping its end (which now already resides at a time point in $W$) fixed, and moving its start to the next earlier time point in $W$. Again, the shift propagates to earlier pieces in $D^+$, which are moved by the same amount, and shortens $u$'s execution interval once more. When all pieces in $D^+$ have been modified in this way, we turn to $D^-$ and apply the same procedure there. This time, we keep the start times fixed and instead shift the right end points of the execution intervals further to the right. As before, $u$ ``absorbs'' the propagated shifts, as we increase its start time accordingly. After this modification, the execution intervals of all pieces in $D_i$ start and end at time points in $W$. 

To complete the proof, we need to argue that the speedup of piece $u$
is bounded by a factor  $(1+\delta)$. Since $|D_i| \leq 4n^3\lceil1/\delta\rceil$, $u$'s execution interval can be shortened at most $4n^3\lceil1/\delta\rceil$ times, each time by a length of at most $\Delta$. Furthermore, recall that the execution time of $u$ was $\Gamma_u \geq 4n^3\lceil 1/\delta \rceil (1+\lceil 1/\delta \rceil ) \Delta$. Therefore, its new execution time is at least $\Gamma_u- 4n^3\lceil1/\delta\rceil \Delta \geq \Gamma_u - \frac{\Gamma_u}{1+\lceil 1/\delta \rceil}$, and the speedup factor thus at most
\[\frac{\Gamma_u}{\Gamma_u-\frac{\Gamma_u}{1+\lceil 1/\delta \rceil}} = \frac{1}{1-\frac{1}{1+\lceil 1/\delta \rceil}} \leq 1 + \delta.\]

Again, the transformation does not change the processing order of any pieces nor the zone in which any piece is scheduled, and thus preserves the well-ordered property of Definition~\ref{def:discr}.
\end{proof}

We now show that the speedup used in our transformation does not increase the energy consumption by more than a factor of $1+\epsilon$. To this end, observe that for any $t\in \mathcal{T}$, the speed of the schedule OPT in Lemma~\ref{lem:discretized} is at most $\scrit$, by Lemma~\ref{lem:optproperties}(\ref{prop:block}). Furthermore, note that the final schedule $F$ has speed zero whenever OPT has speed zero. This allows $F$ to use exactly the same sleep phases as OPT (resulting in the same wake-up costs). It therefore suffices to prove the following lemma, in order to bound the increase in energy consumption. 

\begin{lemma}\label{lem:speedup_powerup}
For any $s \in [0,\scrit]$, there holds
\[\frac{P\big((1+\delta)^3 s\big)}{P(s)} \leq 1+\epsilon.\]
\end{lemma}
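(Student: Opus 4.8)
The plan is to convert the multiplicative inequality into a bound on the slope of $P$, and then combine convexity with the defining property of $\scrit$ and the definition of $\delta$ to close the argument. The case $s = 0$ is trivial, since the ratio equals $1$, so assume $s \in (0,\scrit]$. Because $\delta \le \tfrac14$, the factor $c := (1+\delta)^3$ satisfies $c < 2$, and hence $cs \le c\,\scrit < 2\scrit$. This containment, together with $s \le \scrit$, is exactly what lets me compare the interval $[s, cs]$ against the reference interval $[\scrit, 2\scrit]$.

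First I would invoke convexity of $P$ in the standard form that the difference quotient $\frac{P(b)-P(a)}{b-a}$ is non-decreasing when the interval $[a,b]$ is shifted to the right (both endpoints weakly increasing). Since $s \le \scrit$ and $cs \le 2\scrit$, this gives
\[
\frac{P(cs)-P(s)}{cs-s} \;\le\; \frac{P(2\scrit)-P(\scrit)}{\scrit},
\]
and therefore $P(cs)-P(s) \le (c-1)\,\frac{s}{\scrit}\,\big(P(2\scrit)-P(\scrit)\big)$.

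The crux of the argument, which I expect to be the main obstacle, is to turn this additive estimate into one proportional to $P(s)$: so far the right-hand side is governed by $P(2\scrit)-P(\scrit)$, whereas $P(s)$ could be as small as $P(0)$. Here I would use that $\scrit$ is the minimizer of $P(s)/s$, so that $P(s)/s \ge P(\scrit)/\scrit$ for every $s > 0$, equivalently $\frac{s}{\scrit} \le \frac{P(s)}{P(\scrit)}$. Substituting this bound yields
\[
P(cs)-P(s) \;\le\; (c-1)\,\frac{P(2\scrit)-P(\scrit)}{P(\scrit)}\,P(s).
\]
It then remains only to bound the constant. A direct expansion shows that $(c-1) = (1+\delta)^3 - 1 \le 4\delta$ for $\delta \le \tfrac14$ (since $\frac{(1+\delta)^3-1}{\delta} = 3 + 3\delta + \delta^2$ is maximized at $\delta = \tfrac14$, where it is below $4$), and plugging in $\delta \le \frac{\epsilon}{4}\frac{P(\scrit)}{P(2\scrit)-P(\scrit)}$ makes the factor multiplying $P(s)$ at most $\epsilon$. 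Rearranging gives $P\big((1+\delta)^3 s\big) \le (1+\epsilon)P(s)$, as claimed.
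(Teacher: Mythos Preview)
Your proof is correct and follows essentially the same approach as the paper's: bound the secant slope of $P$ over $[s,(1+\delta)^3 s]$ by the reference slope $(P(2\scrit)-P(\scrit))/\scrit$ via convexity, convert the factor $s$ into $P(s)$ using that $\scrit$ minimizes $P(s)/s$, and finish with $(1+\delta)^3-1\le 4\delta$ together with the definition of $\delta$. The only cosmetic difference is that the paper first replaces $(1+\delta)^3$ by $1+4\delta$ and then applies convexity in two separate steps (first lengthening $[s,(1+4\delta)s]$ to $[s,s+\scrit]$, then shifting to $[\scrit,2\scrit]$), whereas you invoke the monotonicity of the difference quotient in both endpoints at once; this is a minor streamlining, not a different argument.
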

\begin{proof}
\begin{multline*}
\frac{P\big((1+\delta)^3 s\big)}{P(s)} \overset{(1)}{\leq}
\frac{P\big((1+4\delta) s\big)}{P(s)} = \frac{P(s) +4 \delta s
  \frac{P(s+4 \delta s)-P(s)}{4 \delta s} }{P(s)} \\ 
\overset{(2)}{\leq} \frac{P(s) +4 \delta s
  \frac{P(s+\scrit)-P(s)}{\scrit} }{P(s)}  \overset{(3)}{\leq}
\frac{P(s) +4 \delta s \frac{P(2\scrit)-P(\scrit)}{\scrit} }{P(s)} \\
\overset{(4)}{\leq} 1+4 \delta \frac{\scrit}{P(\scrit)} \cdot \frac{P(2\scrit)-P(\scrit)}{\scrit} \overset{(5)}{\leq} 1+\epsilon.
\end{multline*}
In the above chain of inequalities, (1) holds since
$\delta \le \frac{1}{4}$ and $P(s)$ is non-decreasing. (2) and (3)
follow from the convexity of $P(s)$, and the fact that $4 \delta s \leq
\scrit$. Inequality (4) holds since $\scrit$ minimizes $P(s)/s$ (and thus maximizes $s/P(s)$), and (5)
follows from the definition of $\delta$.
\end{proof}

We summarize the major result of this section in the following lemma.
\begin{lemma}\label{lem:nearopt}
There exists a well-ordered discretized schedule with an energy consumption no more than $(1+\epsilon)$ times the optimal energy consumption for $\J'$. \end{lemma}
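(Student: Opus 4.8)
The plan is to assemble Lemma~\ref{lem:nearopt} directly from the pieces already established, essentially as a bookkeeping exercise that converts the pointwise speed bound of Lemma~\ref{lem:discretized} into a global energy bound via Lemma~\ref{lem:speedup_powerup}. First I would fix a lexicographically minimal YDS-extension OPT for $\J'$; since OPT is in particular an optimal schedule for $\J'$, its energy $E(OPT)$ equals the optimal energy consumption for $\J'$. By Lemma~\ref{lem:discretized}, there exists a well-ordered discretized schedule $F$ with $s_F(t)\le(1+\delta)^3 s_{OPT}(t)$ for every $t\in\mathcal{T}$, and $s_F(t)=s_{OPT}(t)$ for every $t\notin\mathcal{T}$. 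This $F$ is the schedule whose existence the lemma asserts, so the entire task reduces to comparing $E(F)$ with $E(OPT)$.

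Next I would split the energy of each schedule into its contribution inside $\mathcal{T}$ and its contribution outside $\mathcal{T}$ (the intervals $I_i$), plus the wake-up costs. Outside $\mathcal{T}$ the two schedules are pointwise identical in speed, so they contribute equally to the active-energy integral. For the wake-up costs, I would invoke the observation preceding the lemma: by Lemma~\ref{lem:optproperties}(\ref{prop:block}) OPT runs at speed at most $\scrit$ throughout $\mathcal{T}$, and $F$ has speed zero exactly where OPT has speed zero, so $F$ uses precisely the same sleep phases and hence incurs exactly the same number $k$ of wake-up transitions. Therefore $E(F)-E(OPT)$ collapses to the difference of the active-energy integrals over $\mathcal{T}$ alone, namely $\int_{\mathcal{T}}\big(P(s_F(t))-P(s_{OPT}(t))\big)\,dt$.

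To bound this integral I would argue pointwise. Fix any $t\in\mathcal{T}$ and write $s:=s_{OPT}(t)\le\scrit$. If $s=0$ then $s_F(t)=0$ as well and both power values are $P(0)$, so the integrand is zero there. If $s>0$, then since $P$ is non-decreasing and $s_F(t)\le(1+\delta)^3 s$, we have $P(s_F(t))\le P\big((1+\delta)^3 s\big)$, and Lemma~\ref{lem:speedup_powerup} (applicable because $s\in(0,\scrit]$) gives $P\big((1+\delta)^3 s\big)\le(1+\epsilon)P(s)$. Hence $P(s_F(t))\le(1+\epsilon)P(s_{OPT}(t))$ for every $t\in\mathcal{T}$. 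Integrating over $\mathcal{T}$ yields
\[
\int_{\mathcal{T}}P(s_F(t))\,dt\le(1+\epsilon)\int_{\mathcal{T}}P(s_{OPT}(t))\,dt,
\]
and adding the identical active-energy on the $I_i$ intervals together with the identical wake-up costs (both of which trivially satisfy the same $(1+\epsilon)$ bound since they are equal and the optimal energy is nonnegative) gives $E(F)\le(1+\epsilon)E(OPT)$. Since $E(OPT)$ is the optimal energy for $\J'$, this is exactly the claimed bound.

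There is no genuine obstacle here, as the substantive work was already carried out in Lemmas~\ref{lem:discretized} and~\ref{lem:speedup_powerup}; the only point requiring a little care is the handling of the regions where $s_{OPT}(t)=0$, where Lemma~\ref{lem:speedup_powerup} does not directly apply but the integrand vanishes because $F$ inherits OPT's sleep phases, and the bookkeeping ensuring that the wake-up costs and the energy on the $I_i$ intervals coincide exactly and therefore do not interfere with the multiplicative $(1+\epsilon)$ estimate.
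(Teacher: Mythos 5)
Your proposal is correct and follows essentially the same route as the paper, which proves this lemma implicitly by combining Lemma~\ref{lem:discretized} with Lemma~\ref{lem:speedup_powerup} together with the observations that $F$ reuses OPT's sleep phases (hence identical wake-up costs), agrees with OPT outside $\mathcal{T}$, and that $s_{OPT}(t)\le\scrit$ on $\mathcal{T}$ by Lemma~\ref{lem:optproperties}(\ref{prop:block}). The only cosmetic remark is that your separate treatment of the case $s_{OPT}(t)=0$ is not strictly needed, since Lemma~\ref{lem:speedup_powerup} is stated for all $s\in[0,\scrit]$ and holds trivially at $s=0$.
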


\section{The Dynamic Program}
\label{sec:dp}

In this section, we show how to use dynamic programming to find a well-ordered discretized schedule with minimum energy consumption.
In the following, we discuss only how to find the minimum energy consumption of this target schedule, as the actual 
schedule can be easily retrieved by proper bookkeeping in the dynamic programming process.

Recall that $D$ is the set of all pieces and $W$ the set of time points.
Let $u_1, u_2, \ldots, u_{|D|}$ be the pieces in $D$, and w.l.o.g. assume that $u_1\prec u_2\prec \ldots \prec u_{|D|}$. 

\begin{definition} For any $k \in \{1,\ldots,|D|\}$, and $\tau_1 \leq \tau_2$, $\tau_1$, $\tau_2 \in W$, we define 
$E_k(\tau_1,\tau_2)$ as the minimum energy consumption during the interval $[\tau_1,\tau_2]$,
of a well-ordered discretized schedule so that 

\begin{enumerate}
\item all pieces $\{u\succeq u_k: \tau_1<d_u\leq \tau_2\}$ are processed in the interval $[\tau_1, \tau_2)$, and 
\item the machine is active right before $\tau_1$ and right after $\tau_2$. 
\end{enumerate}
\label{def:table}
\end{definition}

In case that there is no such feasible schedule, let $E_k(\tau_1, \tau_2) = \infty$. 

The DP proceeds by filling the entries $E_k(\tau_1, \tau_2)$ by decreasing index of $k$. The base cases are 
\begin{eqnarray*}
E_{|D|+1}(\tau_1, \tau_2) &:= & \min\{ P(0)(\tau_2-\tau_1), C\}, \forall \tau_1,\tau_2 \in W, \tau_1 \le \tau_2. \ 
\end{eqnarray*}



For the recursion step, suppose that we are about to fill in $E_k(\tau_1,\tau_2)$. There are two possibilities. 

\begin{itemize}

\item Suppose that $d_{u_k} \not \in (\tau_1, \tau_2]$. Then clearly $E_k(\tau_1,\tau_2) = E_{k+1}(\tau_1,\tau_2)$. 

\item Suppose that $d_{u_k} \in (\tau_1, \tau_2]$. By definition, piece $u_k$ needs to be processed in the interval $[\tau_1,\tau_2)$. 
We need to guess its actual execution period $[b,e) \subseteq [\tau_1,\tau_2)$, and process the remaining pieces 
$\{u\succeq u_{k+1}: \tau_1<d_u\leq \tau_2\}$ in the two intervals $[\tau_1, b)$ and $[e,\tau_2)$. 
We first rule out some guesses of $[b,e)$ that are bound to be wrong. 

\begin{itemize}
\item By Definition~\ref{def:discr}(\ref{prop:disc1}), in a discretized schedule, a piece has to be processed completely inside a zone 
$[t_i,t_{i+1})$ (recall that $t_i \in W'$ are the release times and
deadlines of the jobs). Therefore, in the right 
guess, the interior of $[b,e)$ does not contain any release times or deadlines; more precisely, there is no 
time point $t_i \in W'$ so that $b < t_i < e$. 

\item By Definition~\ref{def:discr}(\ref{prop:disc3}), in a well-ordered discretized schedule, if piece $u_k$ ends at time point $e$, then all 
pieces $u' \succ u_k$ with deadline $d_{u'} \geq e$ are processed \emph{after} $u_k$. However, consider the guess $[b,e)$, where $e = d_{u'}$ for some $u' \succ u_{k}$ (notice that the previous case does not rule out this possibility). Then $u'$ cannot be processed 
anywhere in a well-ordered schedule. Thus, such a guess $[b,e)$ cannot be right. 
\end{itemize} 

By the preceding discussion, if the guess $(b,e)$ is right, 
the two sets of pieces $\{u\succeq u_{k+1}: \tau_1<d_u\leq b\}$ and $\{u\succeq u_{k+1}: e <d_u\leq \tau_2 \}$, along with piece $u_k$, 
comprise all pieces to be processed that are required by the definition of $E_k(\tau_1, \tau_2)$. 
Clearly, the former set of pieces $\{u\succeq u_{k+1}: \tau_1<d_u\leq b\}$ has to be processed 
in the interval $[\tau_1,b)$; the latter set of pieces, in a well-ordered schedule, 
must be processed in the interval $[e,\tau_2)$ if $[b,e)$ is the correct guess for the execution of the piece $u_k$. 

We therefore have that

$$E_k(\tau_1,\tau_2) = \min_{\substack{ b, e \in W,\ [b,e) \subseteq [\tau_1, \tau_2), \\ [b,e) \subseteq[r_{u_k}, d_{u_k}), \\ \not \exists t_i \in W', \ \mbox{s.t. } b < t_i < e, \\ \not \exists 
u' \succ u_{k}, \ \mbox{s.t. } d_{u'}=e.} }
E_{k+1}(\tau_1,b) + P(\frac{v_{u_k}}{e-b})(e-b) + E_{k+1}(e, \tau_2)$$

if there exist $b, e \in W$ with the properties stated under the min-operator, and $E_k(\tau_1,\tau_2) = \infty$, otherwise.
\end{itemize}

The correctness of the DP follows from an inductive argument. It can be verified that the running time of the DP 
is polynomial in the input size and $1/\epsilon$. The minimum energy consumption for the target schedule 
is $E_1(0, d_{max})$.

\begin{theorem}
\label{thm:main-theorem}
There exists a fully polynomial-time approximation scheme (FPTAS) for
speed scaling with sleep state.
\end{theorem}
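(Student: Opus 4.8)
The plan is to assemble the FPTAS by combining the three main ingredients developed in the preceding sections: the reduction from $\J$ to $\J'$ (Lemma~\ref{lem:reduction}), the existence of a near-optimal well-ordered discretized schedule for $\J'$ (Lemma~\ref{lem:nearopt}), and the dynamic program that computes an energy-minimal schedule within this discretized class (Section~\ref{sec:dp}). First I would run the transformation described in Section~\ref{sec:prelim} to produce the modified instance $\J'$ in polynomial time, together with the YDS schedule $S_{YDS}$ on $\J$ needed for the back-transformation. Then I would invoke the dynamic program on $\J'$ to compute $E_1(0,d_{max})$, which by construction equals the minimum energy consumption over all well-ordered discretized schedules for $\J'$, and recover the corresponding schedule $S'$ by bookkeeping.

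The key chaining step is to bound $E(S')$ against the true optimum for $\J$. By Lemma~\ref{lem:nearopt}, there exists a well-ordered discretized schedule for $\J'$ whose energy is at most $(1+\epsilon)$ times the optimal energy for $\J'$; since the DP returns the minimum over exactly this class, $E(S') \le (1+\epsilon)\,E(OPT')$, so $S'$ is a $(1+\epsilon)$-approximation for $\J'$. Crucially, every well-ordered discretized schedule processes each dummy piece $j_i^d$ exactly in its interval $I_i$ (this piece occupies its entire allowed interval as a single piece by the definition of $D$, and cannot be placed elsewhere), so $S'$ satisfies hypothesis (i) of Lemma~\ref{lem:reduction}. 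I would then apply Lemma~\ref{lem:reduction} with $c = 1+\epsilon$ to transform $S'$ in polynomial time into a schedule $S$ for the original instance $\J$ with $E(S) \le (1+\epsilon)\,E(OPT)$, where $OPT$ is a true optimum for $\J$ (a YDS-extension exists by Lemma~\ref{lem:irani} and realizes the optimal energy).

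Finally I would verify the running-time requirement of an FPTAS. Setting $\delta := \min\{\tfrac14, \tfrac{\epsilon}{4}\cdot \tfrac{P(\scrit)}{P(2\scrit)-P(\scrit)}\}$ as in Section~\ref{sec:discretized}, both $|W|$ and $|D|$ are polynomial in the input size and $1/\epsilon$, and the DP fills $O(|D|\cdot|W|^2)$ table entries each computable in time $O(|W|^2)$, so the total running time is polynomial in the input size and $1/\epsilon$. Combined with the polynomial-time transformations, this gives an algorithm whose running time is polynomial in both the input length and $1/\epsilon$, and whose output has energy at most $(1+\epsilon)$ times optimal, which is precisely an FPTAS.

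The main obstacle I anticipate is not any single deep argument but the careful accounting needed to ensure that the approximation guarantees compose correctly across the two reductions without multiplying the error: specifically, confirming that the discretized optimum for $\J'$ is compared against the \emph{same} reference optimum $E(OPT')$ that Lemma~\ref{lem:reduction} uses, and that the $(1+\epsilon)$ factor from the discretization is the only place error is introduced, while the reduction of Lemma~\ref{lem:reduction} preserves the factor $c$ exactly. Verifying that the DP's feasibility constraints (the ruled-out guesses for $[b,e)$) exactly characterize well-ordered discretized schedules, so that the DP neither overcounts nor misses the near-optimal schedule guaranteed by Lemma~\ref{lem:nearopt}, is the most delicate point to state cleanly.
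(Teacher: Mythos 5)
Your proposal is correct and follows essentially the same route as the paper: transform $\J$ to $\J'$, run the DP, invoke Lemma~\ref{lem:nearopt} to get a $(1+\epsilon)$-approximation for $\J'$, and pull it back via Lemma~\ref{lem:reduction} with $c=1+\epsilon$. The only small imprecision is your justification for why every discretized schedule runs each $j_i^d$ exactly in $I_i$: the correct reason is that $W$ contains no points in the interior of $I_i$ (only zones inside $\mathcal{T}$ are subdivided), so property (\ref{prop:disc2}) of Definition~\ref{def:discr} forces the piece to run from $y_i$ to $z_i$, rather than this following from the definition of $D$ alone.
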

\begin{proof}
Given an arbitrary instance $\J$ for speed scaling with sleep state,
we can transform it in polynomial time to an instance $\J'$, as seen in
Section~\ref{sec:prelim}. We then apply the dynamic programming
algorithm that was described in this section to obtain a well-ordered discretized schedule $\mathcal{S'}$
of minimal energy consumption for instance
$\J'$. By Lemma~\ref{lem:nearopt}, we have that $\mathcal{S'}$ is a $(1+\epsilon)$-approximation for instance $\J'$. Furthermore, note that every discretized schedule (and therefore also $\mathcal{S'}$) executes each task $j_i^d$ exactly in its allowed interval $I_i = [y_i, z_i)$. This holds because there are no time
points from the interior of $I_i$ included in $W$, and any discretized schedule
must therefore choose to run $j_i^d$ precisely from $y_i \in W$ to
$z_i \in W$. Therefore, by Lemma~\ref{lem:reduction}, we can transform $\mathcal{S'}$ to a schedule
$\mathcal{S}$ in polynomial time and obtain a
$(1+\epsilon)$-approximation for $\J$.
\end{proof}

%


\bibliography{RaceToIdle}
\bibliographystyle{plain}

\end{document}